\documentclass[11pt,letterpaper]{article}

\usepackage[T1]{fontenc}
\usepackage[cp1251]{inputenc}
\usepackage{textcomp}
\usepackage[centertags]{amsmath}
\usepackage{amsfonts}
\usepackage{amssymb}
\usepackage[hypertex]{hyperref}
\usepackage[numbers,sort&compress]{natbib}
\usepackage{amsthm}

\usepackage{paperinitial}


\paperinitialization{15mm}{15mm}{20mm}{10mm}{2pt}{10pt}

\DeclareMathOperator{\re}{Re}
\DeclareMathOperator{\im}{Im}
\DeclareMathOperator{\Tr}{Tr}
\DeclareMathOperator{\tr}{tr}
\DeclareMathOperator{\sgn}{sgn}

\newcommand{\e}{\varepsilon}
\newcommand{\vf}{\varphi}

\newcommand{\s}{\sigma}

\newcommand{\al}{\alpha}
\newcommand{\be}{\beta}
\newcommand{\ga}{\gamma}
\newcommand{\Ga}{\Gamma}
\newcommand{\de}{\delta}

\newcommand{\la}{\lambda}
\newcommand{\La}{\Lambda}
\newcommand{\ups}{\upsilon}

\newcommand{\spx}{\mathbf{x}}

\newcommand{\N}{\mathbb{N}}

\newtheorem{thm}{Theorem}

\begin{document}
\frenchspacing
\allowdisplaybreaks[4]

\title{{\Large \textbf{High-temperature expansion of the one-loop effective action\\ induced by scalar and Dirac particles}}}

\date{}

\author{I.S. Kalinichenko\thanks{E-mail: \texttt{theo@sibmail.com}},\; P.O. Kazinski\thanks{E-mail: \texttt{kpo@phys.tsu.ru}}\\[0.5em]
{\normalsize Physics Faculty, Tomsk State University, Tomsk 634050, Russia}}

\maketitle

\begin{abstract}

The complete nonperturbative expressions for the high-temperature expansion of the one-loop effective action induced by the charged scalar and the charged Dirac particles both at zero and finite temperatures are derived with account for possible nontrivial boundary conditions. The background electromagnetic field is assumed to be stationary and such that the corresponding Klein-Gordon operator or the Dirac Hamiltonian are self-adjoint. The contributions of particles and antiparticles are obtained separately. The explicit expressions for the $C$-symmetric and the non $C$-symmetric vacuum energies of the Dirac fermions are derived. The leading corrections to the high-temperature expansion due to the nontrivial boundary conditions are explicitly found. The corrections to the logarithmic divergence of the effective action that come from the boundary conditions are derived. The high-temperature expansion of the naive one-loop effective action induced by charged fermions turns out to be divergent in the limit of a zero fermion mass.

\end{abstract}

\section{Introduction}

The evaluation of the high-temperature expansions of the one-loop effective actions is a well elaborated procedure at the present moment. As for quantum electrodynamics (QED), the first results in this field date back to the papers \cite{Ditt,ChodEverOw}. Surprisingly, as far as we know, a uniquely defined (in the nonperturbative sense) expression for the complete high-temperature expansion of the one-loop effective action in QED in the Minkowski spacetime is not derived for a general stationary electromagnetic background. There are two issues in obtaining such a high-temperature expansion: i) To treat properly and nonperturbatively the external electric field; ii) To organize the high-temperature expansion in a manageable way that allows one to deduce its complete form. As for the latter problem, we may distinguish the works \cite{HabWeld,Weldon86,ByVaZe,BytsZerb1,BytsZerb2}. In \cite{ByVaZe,BytsZerb1,BytsZerb2}, the high-temperature expansion was obtained but for the case $A_0=0$ and using the Euclidean approach. As for the former problem, we mention the papers and the book \cite{LoewRoj,ElmSkag95,ElmSkag98,Shovkovy,Shovk99,FursVass}, where the leading terms of the high-temperature expansion for charged particles in the electric field were found. In the present paper, we shall derive the complete nonperturbative high-temperature expansion of the one-loop effective action induced by the charged scalars and the Dirac fermions in a stationary electromagnetic field of a general form in the Minkowski spacetime. We shall also find the contribution to this expansion coming from the nontrivial (MIT bag) boundary condition \cite{CJJThW,CJJTh} imposed on the massive Dirac fermions.

When the external stationary electromagnetic field is purely magnetic, the Dirac equation can be squared to the equation of Klein-Gordon (KG) type with a self-adjoint operator. This facilitates the problem of finding the one-loop thermodynamic potential since the powerful methods based on the use of the spectral zeta-function of the Laplace type operators can be employed \cite{HabWeld,Weldon86,ByVaZe,BytsZerb1,BytsZerb2,DaicFr96,KalKaz1,KalKaz2,KalKaz3}. However, in the presence of the electric field, these methods cannot be immediately applied in the case of Dirac fermions because the naive squaring of the Dirac equation leads to the KG type equation with a non-Hermitian operator (see, e.g., \cite{DunHall,DeWGAQFT,BagGit1}). Furthermore, this KG type operator possesses a kernel larger than the initial stationary Dirac equation. Therefore, we have to modify the approaches of \cite{ByVaZe,BytsZerb1,BytsZerb2,DaicFr96,KalKaz1,KalKaz2,KalKaz3,FursVass} to cope this problem. The main idea we shall rest on consists in using the square of the Dirac Hamiltonian instead of the ``square'' of the Dirac equation operator (see \eqref{Dirac_sq2}). This idea is not new and was employed, for example, in \cite{Furs1,Furs2,EliBorKir}. However, as for deriving the one-loop effective action, such a method did not find an application in QED for general background electromagnetic fields. The detailed comparison of this method with other approaches will be given in Sec. \ref{Comparison}, and so we do not dwell on it here.

As is known \cite{ElmSkag95,ElmSkag98,SkalDem,KazShip,gmse}, the high-temperature expansion of the one-loop $\Omega$-potential without the vacuum contribution can be used to obtain the energy of vacuum fluctuations at zero temperature. Therefore, we shall derive a nonperturbative representation of the vacuum energy for the Dirac fermions in a stationary electromagnetic field of a general configuration without resorting to the Wick rotation prescription (as for bosons, see \cite{KalKaz1,KalKaz2,KalKaz3}). This vacuum energy is real-valued for not too wild fields and corresponds to the standard definition of the vacuum of quantum fields on stationary backgrounds (see, e.g., \cite{GrMuRaf}). In fact, there are two definitions of the vacuum energy for Dirac fermions: the $C$-symmetric (see, e.g., \cite{GrMuRaf, FursVass}) and the non $C$-symmetric one (see, e.g., \cite{AkhBerQED,LandLifQED}). We shall find an explicit formula for the difference between the finite parts of these vacuum energies. It turns out that this difference is given by a surface term and disappears when the spectral problem is posed in the whole space, i.e., when the MIT bag boundary condition is not imposed. We shall also find the leading corrections to the high-temperature expansion coming from the MIT bag boundary condition. In the particular case of neutral massless Dirac fermions confined to a sphere, these corrections coincide with the known ones \cite{DeFranc}. We shall obtain the generalization of the coefficient $a_2$ controlling the conformal anomaly to the case of charged massive fermions obeying the MIT bag boundary condition. In the particular case of neutral massive Dirac fermions confined to a sphere, this expression is in agreement with the known one \cite{EliBorKir}. Besides, we shall derive the high-temperature expansion of the separate contributions of particles and antiparticles to the $\Omega$-potential. This expression can be used to obtain the number of particle-antiparticle pairs in the system \cite{KhalilB,KalKaz3} in the high-temperature regime. The leading order terms coincide with that known in the literature \cite{KhalilB}.

In \cite{KalKaz1,KalKaz2}, we developed a general procedure of how to obtain the complete high-temperature expansion when the spectral problem is reduced to the solution of the KG type equation with a self-adjoint operator. The background fields are supposed to be of a general form, in particular, $A_0\neq0$ and the metric is stationary. This method is close to, but not the same as, that proposed in \cite{Furs1,Furs2,FursVass}. We start in Sec. \ref{HTE_Deriv} with a simpler and more rigorous derivation of the general formula of the asymptotic high-temperature expansion of the one-loop $\Omega$-potential than given in \cite{KalKaz2}. The main tool we shall employ is the theorem \ref{anal_int_prop} borrowed from \cite{GSh} and formulated in Appendix \ref{Analyt_Prop}. This theorem describes some analytic properties of the Mellin transform. Using this theorem and the Mellin transforms of the Bose-Einstein and the Fermi-Dirac distribution functions, we shall deduce the desired high-temperature expansion. Notice that the Mellin transform technique for deriving the high-temperature expansion had been already used in \cite{HabWeld,Weldon86}. In Sec. \ref{Omega_Pot_T}, we shall present the derivation of formula (9) of \cite{KalKaz3} for the Fermi-Dirac $\Omega$-potential at zero temperature and nonzero chemical potential. The explicit expression for the nonrenormalized vacuum energy will be also given. Then, in Sec. \ref{HTE_Dir}, the complete high-temperature expansion of the one-loop $\Omega$-potential induced by the charged fermions obeying the nontrivial boundary conditions will be obtained. This expansion is written in terms of the spectral zeta- and eta-functions of the squared Dirac Hamiltonian. The explicit expressions for the $C$-symmetric and the non $C$-symmetric vacuum energies at zero temperature will be given. After that, we shall derive the large mass expansion of the vacuum energy without the surface terms. In Sec. \ref{Lead_Term_HTE}, we shall present the explicit formulas for the leading terms of the high-temperature expansion with account for the MIT bag boundary condition. As for the volume contributions, these leading terms coincide with the known ones \cite{ElmSkag95,ElmSkag98}. In this section, we shall also obtain the renormalized vacuum energy and the high-temperature expansion of the total one-loop effective action induced by Dirac fermions with the vacuum energy included. We shall prove that, at zero chemical potential, this high-temperature expansion is expressed solely in terms of the heat kernel expansion coefficients no matter how small the mass of the field is. Because of the term proportional to the logarithm of a mass, the one-loop effective action at finite temperature diverges in the limit of a zero fermion mass for the charged Dirac fermions in the high-temperature regime.

\section{Derivation of the formula for the high-temperature expansion}\label{HTE_Deriv}

At the beginning, we present several general formulas from \cite{KalKaz2}. Let $\mathcal{K}(\omega)$ be a Fourier transform of the Hermitian operator of KG type. We confine the system at issue into a large box and suppose that $\mathcal{K}(\omega)$ is a self-adjoint operator of Laplace type possessing a spectrum bounded from above at fixed value of $\omega$ in the Hilbert space of square integrable functions meeting appropriate boundary conditions on the surface of the box. Then, the spectrum $\e_k(\omega)$ of such an operator is discrete with an accumulation point at infinity \cite{Agran,Shubin,Gilkey2.11}. Let us introduce the operator
\begin{equation}\label{K_nu}
    \mathcal{K}^{-\nu}_+(\omega):=e^{-i\pi\nu}\Ga(1-\nu)\int_{C}\frac{d\tau\tau^{\nu-1}}{2\pi i}e^{-\tau \mathcal{K}(\omega)},
\end{equation}
where the contour $C$ runs from below upwards slightly to the left and parallel to the imaginary axis, and $\tau^\nu:=|\tau|^\nu e^{i\nu\arg\tau}$, $\arg\tau\in[0,2\pi)$. The operator \eqref{K_nu} is trace-class when $\re\nu<0$. Let $\omega^\al_k$, $\al=\overline{1,n(k)}$, $n(k)<\infty$, be the real solutions to the equation
\begin{equation}
    \e_k(\omega^\al_k)=0,
\end{equation}
and
\begin{equation}\label{stab_1}
    \e'_k(\omega^\al_k)\neq0.
\end{equation}
Then, the function
\begin{equation}\label{zeta_+}
    \zeta_+(\nu,\omega)=\Tr\mathcal{K}^{-\nu}_+(\omega),
\end{equation}
understood as a generalized function of $\omega$, is analytic for $\re\nu<1$ and admits an analytic continuation to the whole complex $\nu$ plane. Let $\vf(\omega)$ be the trial function, then
\begin{equation}
    \int d\omega\vf(\omega) \zeta_+(\nu,\omega)=\int_0^\infty d\e\e^{-\nu}f_\vf(\e),\qquad f_\vf(\e)=-\int d\omega\vf(\omega)\partial_\e\Tr\theta(\mathcal{K}(\omega)-\e),
\end{equation}
where $\Tr\theta(\mathcal{K}(\omega))$ is a generalized function counting the number of positive eigenvalues of $\mathcal{K}(\omega)$. Suppose that $\e_k(\omega)$ is infinitely differentiable at the points $\omega_k^\al$, and the condition \eqref{stab_1} holds true. Then, there is an asymptotic expansion
\begin{equation}
    f_\vf(\e)\simeq\sum_{k=0}^\infty f_k[\vf]\e^k,
\end{equation}
when $\e\rightarrow+0$. Consequently, according to the theorem \ref{anal_int_prop}, the generalized function $\zeta_+(\nu,\omega)$ possesses singularities in a form of simple poles at the points $\nu\in\N$, and the function $\zeta_+(\nu,\omega)/\Ga(1-\nu)$ is an entire function of $\nu$.

Let the additional stability conditions are met \cite{MigdalMM,FursVass,Furshep}:
\begin{equation}
    a)\;\sgn(\omega_k^\al)\e'_k(\omega_k^\al)>0,\qquad b)\;\zeta_+(\nu,\omega)=0,\quad\omega\in(-\epsilon,\epsilon),
\end{equation}
for some $\epsilon>0$. Then, the one-loop $\Omega$-potential takes the form \cite{KalKaz2}
\begin{equation}
    \Omega=-\int_0^\infty d\omega\Big[\frac{\zeta_+(0,\omega)}{e^{\be(\omega-\mu)}\pm1} +\frac{\zeta_+(0,-\omega)}{e^{\be(\omega+\mu)}\pm1}\Big].
\end{equation}
The first term in this expression is a contribution from particles, and the second one corresponds to antiparticles.

The theorem \ref{anal_int_prop} allows us to give a simple proof of the general formula for the high-temperature expansion of the $\Omega$-potential \cite{KalKaz1,KalKaz2}. Recall the main assumptions made about the zeta-function:
\begin{enumerate}
  \item $\zeta_+(\nu,\omega)=0$ for $\omega\in[0,\omega_c]$;
  \item $\zeta_+(\nu,\omega)$ is absolutely locally integrable on the ray $[\omega_c,+\infty)$ (when $\re\nu<1$);
  \item There is an asymptotic expansion when $\omega\rightarrow+\infty$:
  \begin{equation}\label{zeta_asympt}
    \zeta_+(\nu,\omega)=\sum_{k=0}^N\zeta^{+}_k(\nu)\omega^{d-2\nu-k}+O(\omega^{d-2\nu-N-1}).
  \end{equation}
\end{enumerate}
As was shown in \cite{KalKaz1,KalKaz2} by a direct calculation, the third property holds for the operator $\mathcal{K}(\omega)$ being a Fourier transform of a KG type operator. The coefficients $\zeta_k^+(\nu)$ are expressed through the expansion coefficients of the heat kernel appearing in \eqref{K_nu} (see, e.g., \cite{KalKaz1,KalKaz2,Furs1,Furs2,FursVass}). The second property follows from the definition \eqref{zeta_+}.

Let us start with the case of the Bose-Einstein distribution and introduce the notation
\begin{equation}
    I_{\nu}(\mu)=\int_{0}^{\infty} \frac{d\omega\zeta_{+}(\nu,\omega)}{e^{\beta(\omega-\mu)}-1},\qquad \re\nu<1,
\end{equation}
where $\mu\in(0,\omega_c)$. Using the Mellin transform, we replace a nontrivial functional dependence on $\omega$ in the integrand by a simple power law
\begin{equation}
    \frac{1}{e^{\beta(\omega-\mu)}-1}=\int_{C_1}\frac{d s}{2\pi i}\Gamma(-s)\zeta(-s)[\beta(\omega-\mu)]^s,\quad \re(\omega-\mu)>0,
\end{equation}
where the contour $C_1$ runs parallel to the imaginary axis from below upwards slightly to the left of the point $s=-1$. Then,
\begin{equation}\label{Ioversigma}
    I_{\nu}(\mu)=\int_{C_1}\frac{d s}{2\pi i}\Gamma(-s)\zeta(-s)\beta^s\sigma^{s}_{\nu}(\mu),
\end{equation}
where
\begin{equation}
    \sigma^{s}_{\nu}(\mu):=\int_{0}^{\infty} d\omega(\omega-\mu)^s\zeta_{+}(\nu,\omega)=\int_{\omega_c}^{\infty} d\omega(\omega-\mu)^s\zeta_{+}(\nu,\omega).
\end{equation}
Formula \eqref{Ioversigma} is valid only when the contour $C_1$ is shifted to the domain of convergence of the integral representation for $\sigma^{s}_{\nu}(\mu)$, i.e., when $\re s<2\re \nu-d-1$.

In order to derive the high-temperature expansion of \eqref{Ioversigma}, we have to shift the contour $C_1$ to the domain of positive $\re s$ and take into account the pole structure of the integrand. The location of the poles of the function $\Gamma(-s)\zeta(-s)$ is evident, and the structure of singularities of the function $\sigma^{s}_{\nu}(\mu)$ can be determined by means of the theorem \ref{anal_int_prop}. The change of the variable, $x=(\omega-\mu)^{-1}$, transforms the integral to the desired form
\begin{equation}
    \sigma^{s}_{\nu}(\mu)=\int^{(\omega_c-\mu)^{-1}}_{0} dx x^{-s-2}\zeta_{+}(\nu,\mu+x^{-1}).
\end{equation}
It is not difficult to find the asymptotic expansion of the integrand,
\begin{equation}\label{ammu}
    x^{2\nu-d-s-2}\Big[\sum_{m=0}^{N}a_m(\mu)x^m+O(x^{N+1})\Big],\qquad a_m(\mu)=\sum_{n=0}^{m}\zeta^{+}_{m-n}(\nu)\frac{\Gamma(D-2\nu-m+n)}{\Gamma(D-2\nu-m)}\frac{\mu^n}{n!},
\end{equation}
where $D:=d+1$. All the theorem \ref{anal_int_prop} conditions are satisfied. Therefore, the function $\sigma^{s}_{\nu}(\mu)$ can be continued analytically to the domain $\re s<2\re\nu-d+N$, where
\begin{equation}\label{sigma_gen2}
    \s^s_\nu(\mu)=\int_0^{(\omega_c-\mu)^{-1}}dxx^{-s-2}\Big[\zeta(\nu,\mu+x^{-1}) -\sum_{m=0}^N a_m(\mu)x^{2\nu-d+m}\Big]-\sum_{m=0}^N\frac{a_m(\mu)(\omega_c-\mu)^{s+D-2\nu-m}}{s+D-2\nu-m}.
\end{equation}
It possesses simple poles at the points $s=2\nu-d-2+k$, $k=\overline{1,N+1}$, with the residues $-a_{k-1}(\mu)$. The implication of the theorem \ref{anal_int_prop},
\begin{equation}\label{asympt_sigma}
    \lim_{|\im s|\rightarrow\infty}\s^s_\nu(\mu)=0,\qquad \re s<2\re\nu-d+N,
\end{equation}
guarantees the convergence of \eqref{Ioversigma} and allows us to shift the contour $C_1$ to the right.

Hence, we obtain
\begin{equation}\label{HTE_gen}
\begin{split}
    I_{\nu}(\mu)=&\sum_{m=0}^N\be^{m+2\nu-D}\zeta(D-2\nu-m)\sum_{n=0}^m\Ga(D-2\nu-m+n)\zeta^{+}_{m-n}(\nu)\frac{\mu^n}{n!}+\\
    &+\sum_{l=-1}^{l_0}\frac{(-1)^l\zeta(-l)}{\Ga(l+1)}\s^l_\nu(\mu)\be^l+\int_{C_1}\frac{ds}{2\pi i}\Ga(-s)\zeta(-s)\s^s_\nu(\mu)\be^s,
\end{split}
\end{equation}
where $l_0=\lfloor 2\re\nu-d+N \rfloor$. Notice that the term with $l=-1$ is understood as a limit, and the contour $C_1$ now runs slightly to the left of the line $\re s=N+2\re \nu-d$.

If $N\rightarrow\infty$, then setting aside an exponentially suppressed, as $\be\rightarrow+0$, term, we deduce from \eqref{HTE_gen} the asymptotic expansion
\begin{equation}\label{expan_bos}
    -\Omega_{b}(\mu)\simeq\sum_{k,n=0}^\infty\Gamma(D-2\nu-k)\zeta(D-2\nu-k-n)\frac{\zeta^+_{k}(\nu)(\beta\mu)^n}{n!\beta^{D-2\nu-k}}
    +\sum^{\infty}_{l=-1}\frac{(-1)^l \zeta(-l)}{\Gamma(l+1)}\sigma^l_{\nu}(\mu)\beta^l,\quad \nu\rightarrow0.
\end{equation}
As for fermions, the considerations are quite analogous, but with the difference that the Riemann zeta-function $\zeta(z)$ should be replaced by the Dirichlet eta-function $\eta(z):=(1-2^{1-z})\zeta(z)$ in all the appearances. Therefore,
\begin{equation}\label{expan_ferm}
    -\Omega_{f}(\mu)\simeq\sum_{k,n=0}^\infty\Gamma(D-2\nu-k)\eta(D-2\nu-k-n)\frac{\zeta^+_{k}(\nu)(\beta\mu)^n}{n!\beta^{D-2\nu-k}}
    +\sum^{\infty}_{l=0}\frac{(-1)^l \eta(-l)}{\Gamma(l+1)}\sigma^l_{\nu}(\mu)\beta^l,\quad \nu\rightarrow0,
\end{equation}
The term with $l=-1$ is absent in the expansion since the eta-function does not possess a singularity at $z=1$ unlike the zeta-function. The contributions at the even positive $l$ in the second term in \eqref{HTE_gen}, \eqref{expan_bos}, and \eqref{expan_ferm} vanish. With the aid of formulas \eqref{ammu}, \eqref{sigma_gen2} for the singularities of $\s^s_\nu(\mu)$, one can also check directly that expressions \eqref{expan_bos}, \eqref{expan_ferm} are finite when $\nu\rightarrow0$.

The contribution from antiparticles to the high-temperature expansion of the $\Omega$-potential is derived similarly. Let
\begin{equation}\label{zeta_asympt_a}
    \zeta_+(\nu,-\omega)=\sum_{k=0}^N\zeta^-_k(\nu)\omega^{d-2\nu-k}+O(\omega^{d-2\nu-N-1}),\qquad \omega\rightarrow+\infty.
\end{equation}
If the operator $\mathcal{K}(\omega)$ is invariant under the simultaneous substitution $\omega\rightarrow-\omega$, $A_0\rightarrow-A_0$, then
\begin{equation}
    \zeta_+(\nu,-\omega;A_0)=\zeta_+(\nu,\omega;-A_0),
\end{equation}
and
\begin{equation}
    \zeta_k^-(\nu;A_0)=\zeta_k^+(\nu;-A_0).
\end{equation}
Introduce the following functions
\begin{equation}
    \tau^s_\nu(\mu):=\int_{0}^\infty d\omega(\omega+\mu)^s\zeta_+(\nu,-\omega),\qquad\re s<2\re\nu-d-1,
\end{equation}
understood in the sense of analytical continuation when $\re s\geq 2\re\nu-d-1$. Then,
\begin{equation}\label{HTE_gen_a}
\begin{split}
    \int_0^\infty\frac{d\omega\zeta_+(\nu,-\omega)}{e^{\be(\omega+\mu)}-1}=&\sum_{m=0}^N\be^{m+2\nu-d-1}\zeta(d+1-2\nu-m)\sum_{n=0}^m\Ga(d+1-2\nu-m+n)\zeta^-_{m-n}(\nu)\frac{(-\mu)^n}{n!}+\\
    &+\sum_{l=-1}^{l_0}\frac{(-1)^l\zeta(-l)}{\Ga(l+1)}\tau^l_\nu(\mu)\be^l+\int_{C_1}\frac{ds}{2\pi i}\Ga(-s)\zeta(-s)\tau^s_\nu(\mu)\be^s.
\end{split}
\end{equation}
The proof of this formula is absolutely analogous to the one given above for particles.

\section{Omega-potential of fermions at $T=0$}\label{Omega_Pot_T}

In this section, we derive formula (9) of \cite{KalKaz3} for the nonrenormalized one-loop $\Omega$-potential of fermions at zero temperature and nonzero chemical potential with the contribution of the energy of zero-point fluctuations. The formula may be of use when the solution of the Dirac equation reduces to the solution of KG equation with a self-adjoint operator.

There are two definitions of the energy of vacuum fluctuations for the Dirac fermions in the literature: the $C$-symmetric and the non $C$-symmetric one (see \cite{GrMuRaf} for details). Let us start with the nonsymmetric one. According to the nonsymmetric definition (see, e.g., \cite{AkhBerQED,LandLifQED}), the energy of vacuum fluctuations of fermions is the energy of the ``Dirac sea''. Then, for fermions at zero temperature but nonzero chemical potential,
\begin{equation}\label{Evac_as0}
    E_{vac}=-\int_0^{\Lambda}d\omega\omega\rho(-\omega),\qquad E_{part}=\int_0^{-\mu}d\omega\omega\rho(-\omega),
\end{equation}
where $\La$ is a cut-off parameter, $E_{vac}$ is a nonrenormalized vacuum energy, $E_{part}$ is an average energy of particles, and
\begin{equation}
    \rho(-\omega)=\sgn(\omega)\partial_\omega\Tr\theta(\mathcal{K}(-\omega))
\end{equation}
is the spectral density of energies of antiparticles. Therefore,
\begin{equation}
    E_{tot}=E_{vac}+E_{part}=-\int_{-\mu}^\La d\omega\omega\rho(-\omega)\underset{\be_0\rightarrow0}{\longleftarrow}-2\int_{-\mu}^\infty\frac{d\omega\omega\rho(-\omega)}{e^{\be_0\omega}+1},
\end{equation}
where $\be_0$ is some regularization parameter. The last equality is accurate within the renormalization ambiguity. Thus,
\begin{equation}
\begin{split}
    E_{tot}&=2\frac{\partial}{\partial\be_0}\Big[\be_0\int_{-\mu}^\infty d\omega\sgn(\omega)\frac{\Tr\theta(\mathcal{K}(-\omega))}{e^{\be_0\omega}+1}+\sgn(\mu)\Tr\theta(\mathcal{K}(\mu))\ln(1+e^{-\be_0\mu}) \Big]_{\be_0\rightarrow0}=\\
    &=2\frac{\partial}{\partial\be_0}\Big[\be_0\int_{-\mu}^\infty d\omega\sgn(\omega)\frac{\Tr\theta(\mathcal{K}(-\omega))}{e^{\be_0\omega}+1}\Big]_{\be_0\rightarrow0}+|\mu|\Tr\theta(\mathcal{K}(\mu)),
\end{split}
\end{equation}
where it is assumed that $\Tr\theta(\mathcal{K}(0))=0$. In this case, $\Tr\theta(\mathcal{K}(\mu))$ for $\mu>0$ gives the number of states with a frequency $0<\omega<\mu$ and, for $\mu<0$, it gives the number of states with a frequency $\mu<\omega<0$. Assigning $+1$ to the charge of particles and $-1$ to the charge of antiparticles, we obtain
\begin{equation}\label{omega_pot_zero_temp}
    \Omega_{tot}=E_{tot}-\mu Q=2\frac{\partial}{\partial\be_0}\Big[\be_0\int_{-\mu}^\infty d\omega\sgn(\omega)\frac{\Tr\theta(\mathcal{K}(-\omega))}{e^{\be_0\omega}+1}\Big]_{\be_0\rightarrow0},
\end{equation}
i.e., we arrive at formula (9) of \cite{KalKaz3}.

Applying the formulas of Sec. \ref{HTE_Deriv} to the expression \eqref{omega_pot_zero_temp}, we deduce
\begin{equation}
\begin{split}
    \int_{-\mu}^\infty d\omega\sgn(\omega)\frac{\Tr\theta(\mathcal{K}(-\omega))}{e^{\be_0\omega}+1}=&\sum_{m=0}^N\be^{m+2\nu-D}_0\Ga(D-2\nu-m)\eta(D-2\nu-m)\zeta_{m}^-(\nu)+\\
    &+\sum_{l=0}^{l_0}\frac{(-1)^l\eta(-l)}{\Ga(l+1)}\tilde{\s}^l_\nu(\mu)\be^l_0+\int_{C_1}\frac{ds}{2\pi i}\Ga(-s)\eta(-s)\tilde{\s}^s_\nu(\mu)\be^s_0,\quad\nu\rightarrow0,
\end{split}
\end{equation}
where
\begin{equation}\label{sigma_tilde}
    \tilde{\s}^s_\nu(\mu):=\int_{-\mu}^\infty d\omega\sgn(\omega)\omega^s\zeta_+(\nu,-\omega).
\end{equation}
To justify the applicability of the formulas of Sec. \ref{HTE_Deriv}, we have to assume that $\zeta_+(\nu,\omega)=0$ in some neighborhood of $\omega=0$ and partition the integral in \eqref{sigma_tilde} into two:
\begin{equation}
    \tilde{\s}^s_\nu(\mu)=\int_{-\mu}^0 d\omega\sgn(\omega)\omega^s\zeta_+(\nu,-\omega)+\int_0^\infty d\omega\sgn(\omega)\omega^s\zeta_+(\nu,-\omega).
\end{equation}
The consideration of Sec. \ref{HTE_Deriv} is to be applied to the second integral, while the first integral is an entire function of $s$ tending to zero at $|\im s|\rightarrow\infty$.

If we are interested only in the logarithmic divergence and the finite part of \eqref{omega_pot_zero_temp} as $\be_0\rightarrow0$, then
\begin{equation}
    \Omega_{tot}=(4\nu+2)\Ga(-2\nu)\eta(-2\nu)\zeta^-_{D}(\nu)\be_0^{2\nu}+\tilde{\s}^0_\nu(\mu)+\cdots,\quad\nu\rightarrow0,
\end{equation}
where $\be_0$ tends to zero after the limit $\nu\rightarrow0$. Evaluating this limit, we come to
\begin{equation}\label{omega_tot_mu}
    -\Omega_{tot}=\frac12\Big[\zeta_D^-(0)\ln\frac{4\be_0^2e^{2\ga+2}}{\pi^2}+ \partial_\nu\zeta_D^-(0)\Big]-\text{f.p.}\,\int_0^\infty d\omega\zeta_+(\nu,-\omega)\big|_{\nu\rightarrow0}+\int_0^\mu d\omega\sgn(\omega)\zeta_+(0,\omega),
\end{equation}
where $\text{f.p.}$ denotes the finite part of the expression as $\nu\rightarrow0$. The expression \eqref{omega_tot_mu} needs to be renormalized as, e.g., in \eqref{Evacren}.

When the $C$-symmetric definition for the vacuum energy is used (see, for example, \cite{GrMuRaf,FursVass}),
\begin{equation}\label{Evac1s}
    E^c_{vac}=-\frac12\sum_k E^{(-)}_k -\frac12\sum_k E^{(+)}_k.
\end{equation}
Such an expression for the vacuum energy originates from the transformation of the initially Weyl-ordered Hamiltonian of fermionic fields to the normal form. Then, the vacuum energy of the Dirac fermions reads as
\begin{equation}
    E^c_{vac}=\frac{\partial}{\partial\be_0}\Big\{\sum_{k=0}^\infty\Ga(D-2\nu-k)\eta(D-2\nu-k)\frac{\zeta_k^+(\nu)+\zeta_k^-(\nu)}{\be_0^{d-2\nu-k}} +\sum_{l=0}^\infty\frac{(-1)^l\eta(-l)}{\Ga(l+1)}\be_0^{l+1}\big[\s^l_\nu(0)+\tau^l_\nu(0)\big]\Big\}_{\be_0\rightarrow0},
\end{equation}
where $\nu\rightarrow0$. Taking the limit $\nu\rightarrow0$ and keeping only the finite and divergent terms when $\be_0\rightarrow0$, we arrive at the nonrenormalized $C$-symmetric vacuum energy
\begin{multline}
    E^c_{vac}=\sum_{k=0}^{d-1}\Ga(D-k)\eta(D-k)\frac{\zeta_k^+(0)+\zeta_k^-(0)}{\be_0^{D-k}}(k-d)-\\
    -\frac12\Big\{\frac{\zeta^+_D(0)+\zeta^-_D(0)}{2}\ln\frac{4\be_0^2e^{2\ga+2}}{\pi^2} +\frac{\partial_\nu\zeta_D^+(0)+\partial_\nu\zeta_D^-(0)}{2} -\text{f.p.}\big[\s^0_\nu(0)+\tau^0_\nu(0)\big]_{\nu\rightarrow0}\Big\}.
\end{multline}
Notice that
\begin{equation}
    \s^0_\nu(0)+\tau^0_\nu(0)=\int_{-\infty}^\infty d\omega \zeta_+(\nu,\omega)= e^{-i\pi\nu}\Ga(1-\nu)\int_{-\infty}^\infty d\omega\int_{C}\frac{d\tau\tau^{\nu-1}}{2\pi i}\Tr e^{-\tau \mathcal{K}(\omega)}.
\end{equation}
However,
\begin{equation}\label{pseudoheat_kern}
    T\int_{-\infty}^\infty \frac{d\omega}{2\pi} \Tr e^{-\tau \mathcal{K}(\omega)}\neq \Tr_4 e^{-\tau \mathcal{K}(i\partial_t)},
\end{equation}
where $T$ is an observation period, for the operator under the trace sign on the right-hand side is not trace-class. The left-hand side of \eqref{pseudoheat_kern} can be regarded as the definition for the right-hand side of \eqref{pseudoheat_kern} in the case of stationary background fields. The one-loop correction to the total nonrenormalized $\Omega$-potential is written as
\begin{equation}
    \Omega^c_{tot}=E^c_{vac}-\int_0^\mu d\omega\sgn(\omega)\zeta_+(0,\omega).
\end{equation}
The one-loop correction to the effective Lagrangian, $L_{eff}^{(1)}=-\Omega^c_{tot}$, is real and corresponds to the standard choice of the vacuum state for stationary background fields.

\section{High-temperature expansion for the Dirac fermions}\label{HTE_Dir}

In the previous sections, we have derived the formulas for the high-temperature expansion of the one-loop $\Omega$-potential induced by fermions in the case when the spectral problem for the Dirac equation can be reduced to solving the KG type equation with a self-adjoint operator. This situation is realized in the absence of the external electric field (see, e.g., \cite{KhalilB,Drozd,Dunne2,AndNayTran,KalKaz3} and references therein) and for certain special configurations of the electromagnetic fields (see \cite{BagGit1}).

In this section, we shall derive the formula for the high-temperature expansion of the $\Omega$-potential induced by fermions in the external stationary electromagnetic fields of a general configuration subject to the two restrictions:
\begin{enumerate}
  \item The external electromagnetic field is such that the Dirac Hamiltonian is self-adjoint;
  \item The Dirac Hamiltonian does not possess zero modes.
\end{enumerate}
The first requirement says that the electromagnetic field does not have too strong singularities, and the electromagnetic potentials do not grow too fast at spatial infinity (if the problem is posed in the entire space). The second requirement can be violated in superstrong electromagnetic fields (see, e.g., \cite{GrMuRaf} for details). This condition is rather technical and can be relaxed because the zero modes do not contribute to the one-loop thermodynamic potential of fermions (see \eqref{HTE_Mellin}).

The eigenvalue eigenvector problem takes the form
\begin{equation}\label{Sturm_Lioville_probl}
    H_D\psi_k=\omega_k\psi_k,
\end{equation}
where $\psi_k$ is a Dirac bispinor and
\begin{equation}
    H_D:=A_0+m\ga^0-\al^i P_i,\qquad\al^i:=\ga^0\ga^i,
\end{equation}
and $P_i=p_i-A_i$, $p_i=i\partial_i$. We assume that the chemical potential $\mu$ conjugate to the electric charge is included into the definition of the potential $A_0$ (see, e.g., \cite{Kapustp}). With this definition, the sign of the chemical potential is opposite to that used in Secs. \ref{HTE_Deriv}, \ref{Omega_Pot_T} and in \cite{KalKaz3}. The bispinor $\psi_k$ obeys the boundary condition \cite{CJJThW,CJJTh,Vassil}
\begin{equation}\label{bound_conds}
    \Pi\psi_k\Big|_{\Ga}=0,
\end{equation}
where $\Ga$ is a smooth boundary of the domain where the problem \eqref{Sturm_Lioville_probl} is posed. The matrix $\Pi$ is a projector, and we choose it in the form corresponding to the so-called MIT bag boundary condition \cite{CJJThW,CJJTh}
\begin{equation}
    \Pi:=\frac{1-i n_\mu\ga^\mu}{2},
\end{equation}
where $n^\mu$ is the inward unit normal to the surface $\Ga$.

Let us introduce the spectral functions (see, e.g., \cite{Shubin,Agran,Gilkey2.11})
\begin{equation}\label{ze_eta}
    \zeta_s(\nu)\equiv\zeta(\nu,H_D):=\sum_k|\omega_k|^{-2\nu},\qquad \eta_s(\nu)\equiv\eta(\nu,H_D):=\sum_k\omega_k|\omega_k|^{-2\nu},
\end{equation}
where we suppose that $\ker H_D=0$. Notice that we use a nonstandard notation for the spectral functions \eqref{ze_eta}. The series defining $\zeta_s(\nu)$, $\zeta'_s(\nu)$ converge absolutely when $\re\nu>(d-1)/2$, while for $\eta_s(\nu)$, $\eta'_s(\nu)$ they converge absolutely when $\re\nu>d/2$. Consequently, these functions are analytic in these domains. For other $\nu$, these functions are understood in the sense of analytic continuation. They possess singularities in the $\nu$ plane in the form of simple poles lying on the real axis (see, e.g., \cite{Shubin,Agran,Gilkey2.11} and below).

The spectral functions introduced allow one to evaluate the integrals of the form
\begin{equation}\label{int_spec}
    \int_0^\infty d\omega\rho(\omega)f(\omega),
\end{equation}
where $\rho(\omega)$ is the spectral density of the Hamiltonian $H_D$, provided that $f(\omega)$ can be expressed in the form of the inverse Mellin transform
\begin{equation}
    f(\omega)=\int_C\frac{ds}{2\pi i}\omega^{s-1}\tilde{f}(s),
\end{equation}
with the contour $C$ running downwards parallel to the imaginary axis at $\re s<2-d$. The integral \eqref{int_spec} can be written in the form
\begin{equation}\label{Mellin_trans}
    \int_C\frac{ds}{2\pi i}\tilde{f}(s)\int_0^\infty d\omega\omega^{s-1}\rho(\omega)=\int_C\frac{ds}{4\pi i}\tilde{f}(s)\Big[\zeta_s\Big(\frac{1-s}{2}\Big)+\eta_s\Big(1-\frac{s}2\Big)\Big],
\end{equation}
where we have substituted the Mellin transform of the spectral density
\begin{equation}
    \int_0^\infty d\omega\omega^{s-1}\rho(\omega)=\frac12\Big[\zeta_s\Big(\frac{1-s}{2}\Big)+\eta_s\Big(1-\frac{s}2\Big)\Big],\qquad \re s<2-d.
\end{equation}
It is assumed that the integral over $s$ on the right-hand side of formula \eqref{Mellin_trans} converges. The representation \eqref{Mellin_trans} makes it possible to find the high-temperature expansion of the $\Omega$-potential in terms of the spectral functions $\zeta_s(\nu)$ and $\eta_s(\nu)$.

Indeed, in virtue of the formula
\begin{equation}
    \ln(1+e^{-\be\omega})=-\int_C\frac{ds}{2\pi i}\frac{\Ga(2-s)\eta(2-s)}{s-1}(\be\omega)^{s-1},\qquad\re s<1,
\end{equation}
it follows from \eqref{Mellin_trans} that the contribution of particles to the one-loop $\Omega$-potential can be written as
\begin{equation}\label{HTE_Mellin}
    -\be\Omega=\int_0^\infty d\omega\rho(\omega) \ln(1+e^{-\be\omega})=-\int_C\frac{ds}{4\pi i}\frac{\Ga(2-s)\eta(2-s)}{s-1}\Big[\zeta_s\Big(\frac{1-s}{2}\Big)+\eta_s\Big(1-\frac{s}2\Big)\Big]\be^{s-1},
\end{equation}
where $\re s<2-d$.

In order to find the expansion in the rising powers of $\be$, we need to move the contour $C$ to the right up to the required power of $\be$ and take into account the singularities of the integrand in the complex $s$ plane. In general, the integral along $C$ moved to $+\infty$ does not tend to zero. Therefore, when the contour $C$ is moved to $+\infty$, the resulting series in the rising powers of $\beta$ is only asymptotic. The integral along $C$ gives the remainder of this expansion and is exponentially suppressed when $\be\rightarrow+0$.

Let us investigate the singularities of $\zeta_s(\nu)$ and $\eta_s(\nu)$ in the $\nu$ plane and their behavior when $|\im\nu|\rightarrow\infty$ (see, e.g., \cite{Agran,BranGilk}). These spectral functions can be expressed in terms of the heat kernel trace
\begin{equation}\label{ze_eta_HK}
    \zeta_s(\nu)=\int_0^\infty\frac{d\tau\tau^{\nu-1}}{\Ga(\nu)}\Tr e^{-\tau H^2_D},\qquad \eta_s(\nu)=\int_0^\infty\frac{d\tau\tau^{\nu-1}}{\Ga(\nu)}\Tr(H_De^{-\tau H^2_D}).
\end{equation}
It is useful to introduce \cite{BranGilk}
\begin{equation}\label{zeta_nue}
    \zeta_s(\nu,\e)=\int_0^\infty\frac{d\tau\tau^{\nu-1}}{\Ga(\nu)}\Tr e^{-\tau(H_D-\e)^2}.
\end{equation}
Then
\begin{equation}\label{eta-zeta}
    \eta_s(\nu+1)=\frac{\Ga(\nu)}{2\Ga(\nu+1)}\frac{\partial}{\partial\e}\zeta_s(\nu,\e)\Big|_{\e=0}=\frac{1}{2\nu}\frac{\partial}{\partial\e}\zeta_s(\nu,\e)\Big|_{\e=0}.
\end{equation}
Hence, it is sufficient to study the analytic properties of $\zeta_s(\nu,\e)$ at $\e\rightarrow0$.

Since $(H_D-\e)^2$ is a Laplace type operator (see \eqref{Dirac_sq}), the following asymptotic expansion holds for $\tau\rightarrow+0$ (see \cite{Vassil} for a review)
\begin{equation}\label{HKE}
    \Tr e^{-\tau(H_D-\e)^2}\simeq\sum_{k=0}^\infty\frac{\tau^{(k-d)/2}}{(4\pi)^{d/2}}a_{k/2}(\e),
\end{equation}
where $a_{k/2}(\e)$ are the heat kernel expansion coefficients. The integrand of \eqref{zeta_nue} is absolutely integrable. On the upper limit, the integral over $\tau$ in \eqref{zeta_nue} is convergent for any $\nu$ because $\ker H^2_D=0$. Therefore, introducing some cutoff $\La$ on the upper integration limit in \eqref{zeta_nue}, we can apply the theorem \ref{anal_int_prop}. As a result, we obtain the structure of singularities of $\zeta_s(\nu)$:
\begin{equation}
    \zeta_s\Big(\frac{1-s}{2}\Big)=\sum_{k=0}^\infty\frac{-2a_{k/2}}{(4\pi)^{d/2}\Ga\big((d-k)/2\big)}\frac{1}{s+d-k-1}+\frac{\text{regular}}{\Ga\big((1-s)/2\big)},
\end{equation}
where $a_{k/2}:=a_{k/2}(0)$. From \eqref{eta-zeta} we have
\begin{equation}
    \eta_s\Big(1-\frac{s}{2}\Big)=\sum_{k=0}^\infty\frac{-a'_{k/2}}{(4\pi)^{d/2}\Ga\big(1+(d-k)/2\big)}\frac{1}{s+d-k}+\frac{\text{regular}}{\Ga(1-s/2)},
\end{equation}
where $a'_{k/2}:=\partial_\e a_{k/2}(\e)$ at $\e=0$. It follows from these formulas that
\begin{equation}\label{zeta_part_v}
    \zeta_s(-n)=(-1)^nn!\frac{a_{n+d/2}}{(4\pi)^{d/2}},\qquad\eta_s(-n)=(-1)^nn!\frac{a'_{n+1+d/2}}{2(4\pi)^{d/2}},\qquad n=\overline{0,\infty}.
\end{equation}
The theorem \ref{anal_int_prop} also implies
\begin{equation}
    \Ga(\nu)\zeta_s(\nu,\e)\underset{|\im \nu|\rightarrow\infty}\longrightarrow0.
\end{equation}
Consequently, when $|\im\nu|\rightarrow\infty$, the functions $|\zeta_s(\nu)|$ and $|\eta_s(\nu)|$ tend to infinity not faster than $e^{|\im\nu|(\pi/2+0)}$. Notice that, in the case we consider, a stronger estimate even holds (see the remark after theorem 5.5.2 in \cite{Agran}). This means that the integral over $s$ on the right-hand side of \eqref{HTE_Mellin} converges, and the contour $C$ can be moved to the right.

Now we see that the integrand of \eqref{HTE_Mellin} has the pole singularities at the points $s=\overline{-d,\infty}$. The poles at $\re s\leq0$ and $s=2l+2$, $l=\overline{1,\infty}$, are simple. The other ones are of the second order. Let us introduce the notation
\begin{equation}
    \bar{\zeta}_s(\nu)=\frac{\partial}{\partial\s}[\s\zeta_s(\nu+\s)]_{\s=0},\qquad \bar{\eta}_s(\nu)=\frac{\partial}{\partial\s}[\s\eta_s(\nu+\s)]_{\s=0}.
\end{equation}
The functions $\bar{\zeta}_s(\nu)$, $\bar{\eta}_s(\nu)$ are the finite part of the Laurent series (i.e., the coefficient $c_0$) of the functions $\zeta_s(\nu)$, $\eta_s(\nu)$ in the vicinity of the point $\nu$. Then the high-temperature expansion of the contribution of particles to the one-loop $\Omega$-potential is
\begin{equation}\label{HTE_ferm}
\begin{split}
    -\be\Omega\simeq&\sum_{k=0}^{d-1}\frac{\Ga(D-k)\eta(D-k)}{2(4\pi)^{d/2}(d-k)}\be^{k-d}\Big[\frac{a'_{(k+1)/2}}{\Ga\big((D-k)/2\big)}+\frac{2 a_{k/2}}{\Ga\big((d-k)/2\big)} \Big]+\\
    &+\ln(\sqrt{2}) \Big[\zeta_s(0)+\bar{\eta}_s(1/2) -\ln(2\be^2) \frac{a'_{D/2}}{(4\pi)^{D/2}}\Big]-\frac{\be}{4}\Big[\bar{\zeta}_s(-1/2)+\eta_s(0) +\frac{a_{D/2}}{(4\pi)^{D/2}}\ln\frac{4\be^2e^{2\ga-2}}{\pi^2} \Big]+\\
    &+\sum_{l=1}^\infty \frac{\eta(1-2l)}{4l\Ga(2l)}\be^{2l}\Big[\zeta_s(-l)+\bar{\eta}_s(1/2-l) -\frac{a'_{l+D/2}}{\Ga(1/2-l)}\frac{\ln(\be\al_l/\pi)}{(4\pi)^{d/2}}\Big]-\\
    &-\sum_{l=1}^\infty\frac{\Ga(-2l)\eta(-2l)}{(4\pi)^{d/2}(2l+1)}\be^{2l+1}\frac{a_{l+D/2}}{\Ga(-1/2-l)}\Big],
\end{split}
\end{equation}
where $D:=d+1$,
\begin{equation}
    \ln\al_l:=\frac{\zeta'(2l)}{\zeta(2l)}+\frac{\ln2}{4^l-1}-\frac{1}{2l},
\end{equation}
and we have taken into account that $a'_{0}=0$. As long as
\begin{equation}
    \int_0^\infty d\omega\omega^{s-1}\rho(-\omega)=\frac12\Big[\zeta_s\Big(\frac{1-s}{2}\Big)-\eta_s\Big(1-\frac{s}2\Big)\Big],\qquad \re s<2-d,
\end{equation}
the contribution of antiparticles has the same form as \eqref{HTE_ferm} but with the replacement $\eta_s(\nu)\rightarrow-\eta_s(\nu)$, $\bar{\eta}_s(\nu)\rightarrow-\bar{\eta}_s(\nu)$, and $a'_k\rightarrow-a'_k$. Therefore, the total $\Omega$-potential does not contain the functions $\eta_s(\nu)$, $\bar{\eta}_s(\nu)$, and the heat kernel expansion coefficients $a'_k$. Bearing in mind \eqref{zeta_part_v}, we see that the asymptotic expansion in $\be$ of the total $\Omega$-potential for fermions without the vacuum energy contribution is expressed through the heat kernel expansion coefficients save the term proportional to $\bar{\zeta}_s(-1/2)$.

Let us step back for a while and consider \eqref{HTE_Mellin} once again. If we substitute the representation \eqref{ze_eta_HK} into \eqref{HTE_Mellin} and change the order of integration over $s$ and $\tau$, the following integral arises
\begin{equation}
    I(\be,\tau):=-\tau^{-1}\int_C\frac{ds}{4\pi i}\frac{\Ga(2-s)\eta(2-s)}{(s-1)\Ga\big((1-s)/2\big)}\Big(\frac{\be^2}{\tau}\Big)^{(s-1)/2}.
\end{equation}
This integral is reduced to the Jacobi theta-function. Using the representation \cite{GrRy.6}
\begin{equation}
    \frac{1}{\Ga\big((1-s)/2\big)}=\frac{i}{2\pi}\int_H d\omega e^{-\omega}(-\omega)^{(s-1)/2},
\end{equation}
where $H$ is the Hankel contour and the principal branch of the power function is taken, it is not difficult to obtain that
\begin{equation}
    I(\be,\tau)=\frac{i}{4\pi\tau}\int_H d\omega e^{-\omega}\ln\big[1+e^{-\be(-\omega/\tau)^{1/2}}\big].
\end{equation}
Developing the logarithm as a series and integrating it term by term, we arrive at
\begin{equation}
    I(\be,\tau)=\frac{\be}{8\pi^{1/2}\tau^{3/2}}\big[1-\vartheta_4(0,e^{-\be^2/(4\tau)}) \big]=\frac{\be}{8\pi^{1/2}\tau^{3/2}}+\frac1{4\tau}\sum_{n=-\infty}^\infty e^{-\tau \omega_n^2},
\end{equation}
where $\omega_n:=\pi(2n+1)/\beta$ are the Matsubara frequencies for fermions. Hence, the contribution of $\zeta_s\big((1-s)/2\big)$ to \eqref{HTE_Mellin} is
\begin{equation}\label{theta_repres}
    \int_0^\infty d\tau I(\be,\tau)\Tr e^{-\tau H^2_D}.
\end{equation}
This result is somewhat expected as the evaluation of the path-integral for fermions over $\psi^\dag$ and $\psi$ instead of $\bar{\psi}$ and $\psi$ gives at finite temperature (see, e.g., \cite{BerezPI,KapustB})
\begin{equation}
    \Ga^{(1)}_f=\frac12\sum_{n=-\infty}^\infty\ln\det(\omega_n^2+H_D^2),
\end{equation}
rather than the logarithm of the determinant of the square of the Dirac equation operator. We shall not use the representation \eqref{theta_repres} below, but it can be useful for the numerical evaluation of \eqref{HTE_Mellin}.

As we have already noted in Sec. \ref{Omega_Pot_T}, one can introduce the two definitions for the energy of vacuum fluctuations of the Dirac fermions: the $C$-symmetric \eqref{Evac1s} and the non $C$-symmetric \eqref{Evac_as0} one. In accordance with the non $C$-symmetric definition,
\begin{equation}\label{Evac_as}
    E_{vac}=-\sum_k E^{(-)}_k\underset{\be_0\rightarrow0}{\longleftarrow}2\frac{\partial}{\partial\be_0}\int_0^\infty d\omega\rho(-\omega)\ln(1+e^{-\be_0\omega})|_{\mu=0},
\end{equation}
where $\be_0$ is the regularization parameter and $E^{(-)}_k$ are the antiparticle energies. Using \eqref{zeta_part_v} and \eqref{HTE_ferm}, we obtain the nonrenormalized expression for the vacuum energy
\begin{multline}\label{Evac1}
    E_{vac}=\sum_{k=0}^{d-1}\frac{\Ga(D-k)\eta(D-k)}{(4\pi)^{d/2}}\be^{k-D}_0\Big[\frac{a'_{(k+1)/2}}{\Ga\big((D-k)/2\big)}-\frac{2 a_{k/2}}{\Ga\big((d-k)/2\big)} \Big] +\frac{\ln4}{\be_0}\frac{a'_{D/2}}{(4\pi)^{D/2}}-\\
    -\frac12\Big[\bar{\zeta}_s(-1/2)-\frac{a'_{1+d/2}}{2(4\pi)^{d/2}} +\frac{a_{D/2}}{(4\pi)^{D/2}}\ln\frac{4\be^2_0e^{2\ga}}{\pi^2} \Big],
\end{multline}
where one should set $\mu=0$. The first line contains the power divergencies only. The logarithmic divergence responsible for the conformal anomaly and the finite part are presented in the last line.

The $C$-symmetric vacuum energy $E^{c}_{vac}$ is defined in \eqref{Evac1s}. The explicit expression for $E^c_{vac}$ is obtained from \eqref{Evac1} by throwing out all the terms containing $a'_{k/2}$. The symmetric definition of the vacuum energy seems to be a more preferable one. It is symmetric with resect to the exchange of notions of a particle and an antiparticle. Besides, this definition implies that the finite part of the vacuum contribution is exactly canceled by the corresponding contribution from the thermal part of the $\Omega$-potential in the high-temperature limit and at zero chemical potential. We see from \eqref{Evac1} that, in the four-dimensional spacetime, the finite parts of the symmetric and nonsymmetric vacuum energies differ only by the surface term proportional to $a'_{5/2}$.

As it should be, the energy of vacuum fluctuations \eqref{Evac_as0}, \eqref{Evac1s} and the corresponding effective action are real-valued for the fields satisfying the conditions mentioned at the beginning of this section. The vacuum state is standardly defined (see, e.g., \cite{GrMuRaf} for details) as the ground state of the Hamiltonian of quantum Dirac fields. The particles are defined as the modes with $\omega_k>0$ (see \eqref{Sturm_Lioville_probl}), whereas the antiparticles are the modes with $\omega_k<0$. The total $\Omega$-potential with the contribution of the vacuum fluctuations becomes
\begin{equation}\label{one_loop_EA}
    L^{(1)}_{eff}=-\Omega^{(1)}_{tot}=-\Omega_{th}-E^{ren}_{vac}.
\end{equation}
The high-temperature expansion of $\Omega_{th}$ is presented in \eqref{HTE_ferm}, where the contribution from antiparticles should be taken into account, and $E^{ren}_{vac}$ is the renormalized vacuum energy (see \eqref{Evacren}). The term $\bar{\zeta}_s(-1/2)$ in $\Omega_{th}$ at zero chemical potential is exactly canceled by the analogous term in $E^{ren}_{vac}$ \cite{ElmSkag95,ElmSkag98,SkalDem,KazShip,gmse}. Hence, the total asymptotic expansion in $\be$ of the one-loop effective Lagrangian $L^{(1)}_{eff}$ induced by the Dirac fermions is expressed only in terms of the heat kernel expansion coefficients, when $\mu=0$. This property is an analog of the fact that the terms with $\s^l_\nu$, for $l$ even, vanish in the high-temperature expansion \eqref{HTE_gen}. Notice that, for a nonzero chemical potential, the cancelation of $\bar{\zeta}_s(-1/2)$ in $L^{(1)}_{eff}$ is not complete (see \cite{KazShip} and \eqref{omega_tot}).

The contributions of $\bar{\zeta}_s(-1/2)$, $\bar{\eta}_s(1/2-l)$, $l=\overline{0,\infty}$, in \eqref{HTE_ferm} are not expressed through the heat kernel expansion coefficients. Nevertheless, as is well known, the heat kernel expansion allows one to find the asymptotic expansion of these contributions in the inverse powers of a large mass $m$. Here we shall find such an expansion only for $\bar{\zeta}_s(-1/2)$ and $\bar{\eta}_s(1/2)$ appearing in the finite at $\be\rightarrow0$ part of the high-temperature expansion. Let
\begin{equation}\label{Hbar}
    \bar{H}^2:=(H_D-\e)^2-m^2,
\end{equation}
and
\begin{equation}\label{HKE2}
    \Tr e^{-\tau\bar{H}^2}\simeq\sum_{k=0}^\infty\frac{\tau^{(k-d)/2}}{(4\pi)^{d/2}}\bar{a}_{k/2}(\e).
\end{equation}
Then, substituting \eqref{Hbar} into this expression and developing it as a series in $\tau$ with the aid of \eqref{HKE}, we deduce
\begin{equation}\label{a_abar}
    a_{k/2}(\e)=\sum_{s=0}^{[k/2]}\frac{(-m^2)^s}{s!}\bar{a}_{k/2-s}(\e),\qquad \bar{a}_{k/2}(\e)=\sum_{s=0}^{[k/2]}\frac{m^{2s}}{s!}a_{k/2-s}(\e).
\end{equation}
Substituting \eqref{Hbar} into \eqref{zeta_nue} and using the expansion \eqref{HKE2}, we come to
\begin{equation}
\begin{split}
    \bar{\zeta}_s(-1/2)\simeq &\, \text{f.p.}\,\sum_{k=0}^\infty\frac{\Ga\big(\epsilon+(k-D)/2\big)}{\Ga(\epsilon-1/2)}(m^2)^{(D-k)/2-\epsilon}\frac{\bar{a}_{k/2}(\e)}{(4\pi)^{d/2}}=\\
    = &\, \text{f.p.}\,\sum_{s=0}^{[D/2]}\frac{\Ga(\epsilon-s)}{\Ga(\epsilon-1/2)}(m^2)^{s-\epsilon}\frac{\bar{a}_{D/2-s}(\e)}{(4\pi)^{d/2}} -\sideset{}{'}\sum_{k=0}^\infty \Ga\Big(\frac{k-D}2\Big) (m^2)^{(D-k)/2} \frac{\bar{a}_{k/2}(\e)}{(4\pi)^{D/2}}=\\
    = &\, \sum_{s=0}^{[D/2]}\Big[\ln\frac{m^2}{4e^{\ga-2}} -\psi(1+s)\Big] \frac{(-m^2)^s}{s!}\frac{\bar{a}_{D/2-s}(\e)}{(4\pi)^{D/2}} -\sideset{}{'}\sum_{k=0}^\infty \Ga\Big(\frac{k-D}2\Big) (m^2)^{(D-k)/2} \frac{\bar{a}_{k/2}(\e)}{(4\pi)^{D/2}}=\\
    = &\, \frac{a_{D/2}(\e)}{(4\pi)^{D/2}}\ln\frac{m^2}{4e^{\ga-2}} -\sum_{s=0}^{[D/2]}\frac{(-m^2)^s}{s!}\psi(1+s)\frac{\bar{a}_{D/2-s}(\e)}{(4\pi)^{D/2}} -\sideset{}{'}\sum_{k=0}^\infty \Ga\Big(\frac{k-D}2\Big) (m^2)^{(D-k)/2} \frac{\bar{a}_{k/2}(\e)}{(4\pi)^{D/2}},
\end{split}
\end{equation}
where $\text{f.p.}$ means the finite part with respect to $\epsilon\rightarrow0$, the relation \eqref{a_abar} has been used in the last line, and the prime at the sum sign recalls that all the terms that are singular due to the gamma function must be omitted. Employing the relation \cite{GrRy.6}
\begin{equation}
    \psi(1+n)=-\ga+H_n,\qquad H_n:=\sum_{k=1}^nk^{-1},\qquad H_0:=0,
\end{equation}
we have eventually
\begin{equation}\label{fin_part}
\begin{split}
    \bar{\zeta}_s(-1/2)\simeq \frac{a_{D/2}(\e)}{(4\pi)^{D/2}}\ln\frac{m^2e^2}{4} -\sum_{s=1}^{[D/2]}\frac{(-m^2)^s}{s!}H_s\frac{\bar{a}_{D/2-s}(\e)}{(4\pi)^{D/2}} -\sideset{}{'}\sum_{k=0}^\infty \Ga\Big(\frac{k-D}2\Big) (m^2)^{(D-k)/2} \frac{\bar{a}_{k/2}(\e)}{(4\pi)^{D/2}}.
\end{split}
\end{equation}
As far as $\bar{\eta}_s(1/2)$ is concerned, it is obtained from $\bar{\zeta}_s(-1/2)$ by means of \eqref{eta-zeta}.

Substituting \eqref{fin_part} into \eqref{Evac1}, we derive the asymptotic expansion
\begin{multline}\label{Evac2}
    E_{vac}\simeq\sum_{k=0}^{d-1}\frac{\Ga(D-k)\eta(D-k)}{(4\pi)^{d/2}}\be^{k-D}_0\Big[\frac{a'_{(k+1)/2}}{\Ga\big((D-k)/2\big)}-\frac{2 a_{k/2}}{\Ga\big((d-k)/2\big)} \Big] +\frac{\ln4}{\be_0}\frac{a'_{D/2}}{(4\pi)^{D/2}}-\\
    -\frac12\Big[\frac{a_{D/2}}{(4\pi)^{D/2}}\ln\frac{m^2\be^2_0e^{2\ga+2}}{\pi^2} -\frac{a'_{1+d/2}}{2(4\pi)^{d/2}} -\sum_{s=1}^{[D/2]}\frac{(-m^2)^s}{s!}H_s\frac{\bar{a}_{D/2-s}}{(4\pi)^{D/2}} 
    -\sideset{}{'}\sum_{k=0}^\infty \Ga\Big(\frac{k-D}2\Big) (m^2)^{(D-k)/2} \frac{\bar{a}_{k/2}}{(4\pi)^{D/2}}\Big],
\end{multline}
where $\mu=0$. The last term in this expression looks as the standard expansion of the one-loop effective action with respect to a large mass (see, e.g., \cite{BirDav.11,DeWGAQFT}). However, one should bear in mind that $\bar{a}_{k/2}$ depend on $m$. The maximal power of $m$ appearing in $\bar{a}_{k/2}$ will be found later in Sec. \ref{Lead_Term_HTE}.

The self-adjoint positive-definite Laplace type operator defining the heat kernel \eqref{zeta_nue} takes the form
\begin{equation}\label{Dirac_sq}
    (H_D-\e)^2=D_i^2+m^2+\frac12\s^{ij}F_{ij}-2P^2_0-2mP_0\ga^0,\qquad \s^{\mu\nu}:=\frac{i}{2}[\ga^\mu,\ga^\nu],
\end{equation}
where $P_0:=\e-A_0$ and $D_i=P_i-\al_i P_0$. The operator $i\ga^0\ga^5$ is self-adjoint in the Hilbert space specified by the condition \eqref{bound_conds}. Inserting $(i\ga^0\ga^5)^2$ under the trace sign in \eqref{zeta_nue}, we infer that
\begin{equation}
    \zeta_s(\nu;A_0)=\zeta_s(\nu;-A_0),\qquad \eta_s(\nu;A_0)=-\eta_s(\nu;-A_0),
\end{equation}
for $\e=0$. Remark that the derivative with respect to $\e$ in \eqref{eta-zeta} and in $a'_{k/2}$ at $\e=0$ can be replaced by the minus derivative with respect to the chemical potential $\mu$ entering into the definition of $A_0$. Therefore, we put henceforth $P_0=-A_0$ in \eqref{Dirac_sq}.

The operator $\ga^5$ is not compatible with the boundary condition \eqref{bound_conds} and does not map the Hilbert space into itself. Nevertheless, if we consider the spectral problem \eqref{Sturm_Lioville_probl} in the entire space, i.e., if we do not impose the boundary condition \eqref{bound_conds}, then $\ga^5$ becomes a self-adjoint operator in the respective Hilbert space. In this case, inserting $(\ga^5)^2$ under the trace sign in \eqref{zeta_nue}, we see that the spectral density $\rho(\omega)$ is symmetric with respect to the replacement $m\rightarrow-m$. In the next section, we shall explicitly show that the boundary condition \eqref{bound_conds} violates the symmetry of the spectral density under $m\rightarrow-m$.

\section{Leading terms of the high-temperature expansion}\label{Lead_Term_HTE}

Making use of the general formulas of the previous section, let us obtain the leading terms of the high-temperature expansion of the $\Omega$-potential of Dirac fermions in a bag. The first few coefficients of the heat kernel expansion associated with the Laplace type operator with nontrivial boundary conditions are given in \cite{Vassil,BGKV}. The higher coefficients without the surface terms are presented in \cite{Ven}. The requirement of self-adjointness of the Dirac Hamiltonian with the boundary condition \eqref{bound_conds} leads to
\begin{equation}\label{bound_conds_2}
    \Pi\ga^0(\hat{P}-m)\psi\Big|_{\Ga}=\ga^0(\nabla_n+m-\frac12 L_{aa})\bar{\Pi}\psi\Big|_{\Ga}=0,\qquad\bar{\Pi}=1-\Pi=\frac{1+i\hat{n}}{2},
\end{equation}
where $\hat{P}=\ga^\mu P_\mu$ and
\begin{equation}
    \nabla_i=\partial_i+iA_i-i\al_i A_0,\qquad \nabla_n=n_i\nabla_i,\qquad\al_i=-\al^i.
\end{equation}
We have introduced here the extrinsic curvature of the hypersurface $\Ga$ where the boundary condition is defined (see, for details, \cite{FursVass}):
\begin{equation}
    L_{ab}:=-e_a^\mu e_b^\nu\partial_\mu n_\nu+n_\la\Ga^\la_{\mu\nu}e_a^\mu e_b^\nu=-e_a^ie_b^j\partial_in_j,\quad a,b=\overline{1,2},
\end{equation}
where $e_a^\mu$ are orthonormal vectors tangent to the surface, $n_{\mu}=(0,\textbf{n})$ is the inward unit normal, and $\Ga^\la_{\mu\nu}$ is the Levi-Civita connection (in our case it is equal to zero). Hence, employing the notation from \cite{Vassil,BGKV}, we have from \eqref{Dirac_sq} and \eqref{bound_conds_2}:
\begin{equation}\label{ESOm}
\begin{gathered}
    \chi=i\hat{n},\qquad S=(m-\frac12 L_{aa})\bar{\Pi},\\
    E=2A_0^2-2mA_0\ga^0-\frac{1}{2}\s^{ij}F_{ij}-m^2,\qquad\Omega_{ij}=i(F_{ij}+\al_{[i}\partial_{j]}A_0-2\s_{ij}A_0^2).
\end{gathered}
\end{equation}
The expressions for the traces arising in evaluating the heat kernel expansion coefficients are given in Appendix \ref{Traces_Sec}. Substituting the traces found in \eqref{traces}-\eqref{traces3} to the general formulas\footnote{Notice that we did not check these formulas.} of \cite{Vassil,BGKV}, we obtain
\begin{equation}\label{ak}
\begin{split}
    a_0=&\,4\int d\spx,\qquad a_{1/2}=0,\qquad a_1=\int d\spx(8A_0^2-4m^2)+\int_\Ga d^2\tau\sqrt{h}[4m-\tfrac{2}{3}L_{aa}],\\
    a_{3/2}=&\,\frac{\sqrt{\pi}}{16}\int_\Ga d^2\tau\sqrt{h}[32m^2-16mL_{aa}-2L_{ab}L_{ab}+L_{aa}^2],\\
    a_{2}=&\int d\spx[2m^4+\tfrac23 F_{\mu\nu}F^{\mu\nu}+\tfrac43\partial_k\partial_k(A_0^2)]+\\
    &+\int_\Ga d^2\tau\sqrt{h}[\tfrac43\partial_nA_0^2 +16mA_0^2 -\tfrac43 m^3 -\tfrac23m^2L_{aa} +\tfrac2{15}mL_{aa}^2 -\tfrac{2}{5}mL_{ab}L_{ab}+\\
    &+\tfrac{17}{945}L_{aa}^3 +\tfrac{13}{315}L_{ab}L_{ab}L_{cc} -\tfrac{116}{945}L_{ab}L_{bc}L_{ac} -\tfrac1{15}L_{aa:bb}],\\
    a_{5/2}=&\,\sqrt{\pi}\int_\Ga d^2\tau\sqrt{h}[6m^2A_0^2-3mA_0^2L_{aa}+5m\partial_nA_0^2+\tfrac18\partial_nA_0^2L_{aa}+\tfrac18A_0^2L_{ab}L_{ab}-\tfrac1{16}A_0^2L_{aa}^2],
\end{split}
\end{equation}
where $\tau$ are the coordinates on the boundary surface, and $h$ is the determinant of the metric induced on $\Ga$. Notice that, in $a_2$, the last term in the first line is canceled by the first term in the second line. The last term in $a_2$ can be omitted. As for $a_{5/2}$, the terms that do not vanish under differentiation with respect to $\mu$ are only presented. The volume terms in the coefficients $a_{k}$, $k=\overline{0,2}$, coincide with the ``pseudo-trace'' expansion coefficients (\cite{FursVass}, problem 7.21, see also \eqref{pseudo_tr}). The derivatives with respect to $-\mu$ are written as
\begin{equation}
\begin{gathered}
    a'_0=a'_{1/2}=a'_{3/2}=0,\qquad a'_1=-16\int d\spx A_0,\qquad a'_2=-32\int_\Ga d^2\tau\sqrt{h}mA_0,\\
    a'_{5/2}=\sqrt{\pi}\int_\Ga d^2\tau\sqrt{h}[6mA_0L_{aa} -12m^2A_0 -10m\partial_nA_0 -\tfrac14\partial_nA_0L_{aa} -\tfrac14A_0L_{ab}L_{ab} +\tfrac18A_0L_{aa}^2].
\end{gathered}
\end{equation}
As we have discussed above, the last expression gives the difference between the finite parts of the $C$-symmetric and the non $C$-symmetric vacuum energies of Dirac fermions. Upon renormalization, this contribution has to be completely canceled out by the counterterms since the integrand in $a'_{5/2}$ has the dimension not exceeding $3$ and is not Lorentz-invariant, even if one restores the dependence on the normal vector $n^\mu$. Thus, after renormalization, the $C$-symmetric and the non $C$-symmetric vacuum energies coincide.

The leading correction to the high-temperature expansion due to the nontrivial boundary condition \eqref{bound_conds} comes from the coefficient $a_1$. When the boundary $\Ga$ is a sphere of the radius $R$, the extrinsic curvature $L_{ab}=\de_{ab}/R$. In this case, substituting \eqref{ak} into \eqref{HTE_ferm} doubled and setting $m=0$, we reproduce the leading correction in (48) of \cite{DeFranc} to the free energy due to the boundary condition \eqref{bound_conds}. As we see, the coefficient $a_1$ and the corresponding contribution to \eqref{HTE_ferm} are not symmetric with respect to $m\rightarrow-m$. Consequently, the spectral density $\rho(\omega)$ is not symmetric under this replacement either. The volume contributions to $a_{k/2}$ are invariant with respect to $m\rightarrow-m$.

The coefficient $a_2$ is prefixed to $\ln\be_0$ in the vacuum energy \eqref{Evac1} and, as is well known, related to the conformal anomaly. Formula \eqref{ak} provides the generalization of the standard expression for $a_2$ to the case of the electrically charged Dirac fields obeying the boundary condition \eqref{bound_conds}. In the case of the neutral massive Dirac field confined to a sphere, the surface term coming from $a_2$ coincides with (3.1) of \cite{EliBorKir}. The surface term in $a_2$ is not symmetric under $m\rightarrow-m$. The term $mA_0^2$ in the surface contribution to $a_2$ should be completely canceled out by the counterterm inasmuch as it is not Lorentz-invariant.

In order to renormalize the other contributions to the vacuum energy, we need to find the terms in $\bar{a}_k$ with the mass dimension less than or equal to $4$, the dimension of the coefficients depending on $m$ being not taken into account. So we have to determine the maximal power of $m$ entering into $\bar{a}_{k/2}$. As for the volume terms in $\bar{a}_{k/2}$, such contributions come from the second term in $E$. The maximal power of $m$ in the surface terms stems from the first term in $S$. In this case,
\begin{equation}
    \bar{a}_{k/2}|_{\text{v.t.}}\sim (mA_0)^{k/2},\qquad \bar{a}_{k/2}|_{\text{s.t.}}\sim m^{k-1}.
\end{equation}
Therefore, if one takes into account the surface terms, the expansion in $m^{-2}$ in the last contribution in \eqref{fin_part} and \eqref{Evac2} is, in fact, not an expansion in the rising powers of $m^{-1}$. In order to obtain the large mass expansion with the surface contributions, one has to resum the expansion in \eqref{fin_part} and \eqref{Evac2}. We leave this issue for a future research and, in considering the large mass expansion, shall cast out the surface terms. The expansion obtained thereby is valid for the problem \eqref{Sturm_Lioville_probl} posed in the whole space without the boundary condition \eqref{bound_conds}.

The necessary for the renormalization procedure terms are
\begin{equation}
\begin{split}
    \bar{a}_{3}|_{\text{v.t.}}=&\,\tfrac1{360}\int d\spx\tr(-30E_{;i}E_{;i}+60 E^3)=-\tfrac43\int d\spx m^2\mathbf{E}^2,\\
    \bar{a}_{4}|_{\text{v.t.}}=&\,\tfrac1{4!}\int d\spx\tr E^4=\tfrac{8}3\int d\spx m^4A_0^4.
\end{split}
\end{equation}
The other volume terms in $\bar{a}_{k/2}$, $k\geq3$, are suppressed in \eqref{fin_part}, \eqref{Evac2} by a power of a mass. The asymptotic large mass expansion of the vacuum energy \eqref{Evac2} without the surface terms takes the form
\begin{multline}\label{Evac_lme}
    E_{vac}=-\int d\spx\Big[\frac{7\pi^2}{60\be_0^4} +\frac{6\zeta(3)}{\be_0^{3}}\tilde{A}_0 +\frac{1}{6\be_0^2}(\tilde{A}_0^2-\tfrac12m^2)-\\
    -\Big(\frac{m^4}{16\pi^2}+\frac{F_{\mu\nu}F^{\mu\nu}}{48\pi^2} \Big)\ln\frac{m^2\be_0^2e^{2\ga+2}}{\pi^2} -\frac{m^2\tilde{A}_0^2}{4\pi^2} +\frac{\tilde{A}_0^4}{12\pi^2} +\frac{3m^4}{32\pi^2} -\frac{\mathbf{E}^2}{24\pi^2} +O(m^{-2})\Big],
\end{multline}
where $\tilde{A}_0:=A_0|_{\mu=0}$. In accordance with the standard renormalization rules (see, e.g, \cite{BogShir}), all the written terms in \eqref{Evac_lme} should be completely canceled out by the counterterms provided one normalizes the effective action at a zero photon momentum and a zero chemical potential. As a result, we have from \eqref{Evac1}:
\begin{equation}\label{Evacren}
    E^{ren}_{vac}=E_{vac}+\text{c.t.}=\frac{a_2}{32\pi^2}\ln(m^2e^2)-\frac12\bar{\zeta}_s(-1/2)|_{\mu=0} +\int d\spx\Big[\frac{m^2\tilde{A}_0^2}{4\pi^2}-\frac{\tilde{A}_0^4}{12\pi^2}-\frac{3m^4}{32\pi^2}+\frac{\mathbf{E}^2}{24\pi^2}\Big].
\end{equation}
Substituting this expression into \eqref{one_loop_EA}, we find that the complete asymptotic high-temperature expansion of the total one-loop $\Omega$-potential of the Dirac fermions with  account for the contribution of antiparticles and the vacuum energy (but without the surface terms) reads as
\begin{multline}\label{omega_tot}
    -\Omega^{(1)}_{tot}\simeq\int d\spx\Big[\frac{7\pi^2}{180\be^4} +\frac{2A_0^2-m^2}{12\be^2} -\frac{m^2\tilde{A}_0^2}{4\pi^2} +\frac{\tilde{A}_0^4}{12\pi^2} -\frac{\mathbf{E}^2}{24\pi^2} +\frac{3m^4}{32\pi^2}\Big]-\\
    -\frac12\big[\bar{\zeta}_s(-1/2) -\bar{\zeta}_s(-1/2)|_{\mu=0}\big]
    -\frac{a_2}{32\pi^2}\ln\frac{4m^2\be^2e^{2\ga}}{\pi^2}
    +\sum_{l=1}^\infty \frac{\Ga(-2l)\eta(-2l)}{(4\pi)^{d/2}}\frac{a_{l+2}\be^{2l}}{\Ga(1/2-l)} .
\end{multline}
Let us note once again that, at $\mu=0$, this expansion is expressed solely in terms of the heat kernel expansion coefficients $a_k$, and this is not a large mass expansion. Because of the logarithmic contribution, this expression is not defined when $m\rightarrow0$, i.e., there exists an infrared divergence in this limit, at least at zero chemical potential. One can try to solve this problem by introducing an effective mass for the fermions and sum thereby an infinite number of the ring diagrams (see, e.g., \cite{Miransky,KapustB}). The contribution of the photons to \eqref{omega_tot} with their effective mass should be also taken into account.

The leading terms of the high-temperature expansion of the $\Omega$-potential \eqref{HTE_ferm} expanded in a large mass $m$ without the contribution of antiparticles and the surface terms are written as
\begin{multline}\label{HTE_LME}
    -\Omega=\int d\spx\Big[\frac{7\pi^2}{360\be^4} -\frac{3\zeta(3)}{2\pi^2\be^3}A_0 +\frac{2A_0^2-m^2}{24\be^2} +\frac{\ln2}{2\pi^2\be} (m^2A_0 -\tfrac23 A_0^3)-\\
    -\Big(\frac{m^4}{32\pi^2} +\frac{F_{\mu\nu} F^{\mu\nu}}{96\pi^2}\Big)\ln\frac{m^2\be^2e^{2\ga}}{\pi^2} -\frac{m^2A_0^2}{8\pi^2} +\frac{A_0^4}{24\pi^2} -\frac{\mathbf{E}^2}{48\pi^2} +\frac{3m^4}{64\pi^2} \Big]+\cdots
\end{multline}
For $\mathbf{E}=0$, this expression coincides with the leading terms of the expansion (105) of \cite{KalKaz3} provided one changes the sign of the chemical potential. If we take into account the contribution of antiparticles, i.e., if we take the doubled symmetric in $A_0$ part of \eqref{HTE_LME}, then we obtain formula (4.2) of \cite{ElmSkag98} up to a renormalization ambiguity. The nonsymmetric in $A_0$ part of \eqref{HTE_LME} can be used, for example, to find the number of electron-positron pairs in the system (see, for details, \cite{KhalilB,KalKaz3}) at finite temperature.

\section{Comparison with other approaches}\label{Comparison}

Let us compare the method we have developed with the approaches presented in the literature. Of course, we do not pretend here to a complete description of the whole literature devoted to this subject, but mention only the main approaches that we know and that are close to the method used by us.

Usually, the one-loop correction to the effective action is expressed in terms of the trace \cite{Schwing.10,DeWGAQFT,BirDav.11}
\begin{equation}\label{Trexp}
    \Tr_4 e^{-\tau \mathcal{K}},
\end{equation}
where
\begin{equation}\label{Dirac_sq2}
    -\mathcal{K} =-\hat{P}^2+m^2=-(i\partial_t-A_0)^2+P_i^2+m^2+\frac12\s^{\mu\nu}F_{\mu\nu}.
\end{equation}
The quantity \eqref{Trexp} cannot be directly used for a nonperturbative evaluation of the effective action since the operator under the trace sign is not trace-class \cite{KazMil1,KazMil2}. Moreover, any operator function $f(\mathcal{K})$ different from zero is not trace-class. In order to see this, it is sufficient to consider the trace in the momentum representation in the domain of large $|p_\mu|$. Then
\begin{equation}\label{Spfk}
    \Tr_4 f(\mathcal{K})\approx\int \frac{d^4xd^4p}{(2\pi)^4} f(\mathcal{K}(p,x)),
\end{equation}
where $\mathcal{K}(p,x)$ is the principal symbol of $\mathcal{K}$, which is obtained from $\mathcal{K}$ by the replacement of $i\partial_\mu$ by $p_\mu$ and neglecting the terms of the lower order in $p_\mu$. The integrand of \eqref{Spfk} is invariant under the Lorentz transformations of the vector $p_\mu$ preserving the spacetime metric $g_{\mu\nu}(x)$ at the fixed point $x$. The action of the Lorentz group foliates the momentum space onto the noncompact orbits where the integrand is constant. Therefore, the integral \eqref{Spfk} is not absolutely convergent and can converge only conditionally what implies that the operator $f(\mathcal{K})$ is not trace-class. Perturbatively, this property is revealed as the conditional convergence of the renormalized Feynman diagrams \cite{Dyson49,Zimmer} in the Minkowski spacetime. For the electromagnetic field of a general configuration, the expression \eqref{Trexp} seems to be understood only perturbatively, and the trace should be evaluated in a certain basis (see, for example, \eqref{pseudoheat_kern}).

In the case of the stationary electromagnetic potential $A_\mu$, one can consider the operator
\begin{equation}\label{Dirac_sq3}
    -\mathcal{K} (\omega)=-\hat{P}^2+m^2=-(\omega-A_0)^2+P_i^2+m^2+\frac12\s^{\mu\nu}F_{\mu\nu},
\end{equation}
and pose the so-called nonlinear spectral problem for it (see the book \cite{FursVass} and references therein). Then, formally, the evaluation of the effective action at finite and zero temperatures is reduced to the evaluation of the expression (see Secs. \ref{HTE_Deriv}, \ref{Omega_Pot_T} above)
\begin{equation}\label{Trexp2}
    \Tr e^{-\tau \mathcal{K}(\omega)}.
\end{equation}
The operator $\mathcal{K}(\omega)$ is of Laplace type, the corresponding heat kernel is a trace-class operator, and \eqref{Trexp2} is uniquely defined (see, e.g., \cite{Shubin,Agran,Gilkey2.11}). However, in the presence of the external electric field, the operator $\mathcal{K}(\omega)$ is not Hermitian owing to the last term in \eqref{Dirac_sq3} (see, e.g., \cite{DunHall,DeWGAQFT,BagGit1}). Therefore, the method of Sec. \ref{HTE_Deriv} is not immediately applicable.

In \cite{FursVass}, Sec. 6, the pseudo-trace,
\begin{equation}\label{pseudo_tr}
    K(t)=\frac12\sum_ke^{-t\omega_k^2},
\end{equation}
is introduced. Here $\omega_k$ are the solutions to the equation
\begin{equation}\label{spectr_eqn}
    \omega^2=\la_k(\omega),
\end{equation}
and $\la_k(\omega)$ are the eigenvalues of the operator $\omega^2-\mathcal{K}(\omega)$. The high-temperature expansion of the $\Omega$-potential is expressed through the expansion of
\eqref{pseudo_tr} for small $t$. As long as the operator \eqref{Dirac_sq3} is not self-adjoint, equation \eqref{spectr_eqn} can have complex roots. The methods presented in Sec. 6 and  problem 7.21 of \cite{FursVass} (see also \cite{Furshep}) are not applicable, at least immediately, in this case since they are essentially based on the self-adjointness of the operator $\omega^2-\mathcal{K}(\omega)$, i.e., on the real-valuedness of $\la_k(\omega)$ for real $\omega$ (see the remark after (6.6) of \cite{FursVass} and the requirements ii), iii) after (2.7) of \cite{Furshep}). Besides, the physical interpretation of the complex solutions of the nonlinear spectral problem for the operator \eqref{Dirac_sq3} is unclear. The kernel of the operator \eqref{Dirac_sq3} can be larger than the set of stationary solutions of the initial Dirac equation.

Unfortunately, this issue is poorly discussed in the literature. In \cite{Schwing.10}, this problem is not touched at all. In the standard textbook \cite{GrMuRaf}, it is wrongly assumed that $\mathcal{K}$ is a self-adjoint operator (see the remark after (10.108b)). In \cite{GFSh.3}, Sec. 6.2.3, it is supposed that $\mathcal{K}$ possesses a real-valued spectrum and a set of eigenvectors that is complete and orthonormal with respect to the Lorentz-invariant scalar product. In \cite{KabShab}, it is assumed, in considering the thermodynamic properties of fermions in external electromagnetic fields, that the electric field is completely screened, and actually the case of $\textbf{E}=0$ is investigated. A more detailed discussion of this issue is presented in \cite{DeWGAQFT}, p. 551, but its solution is not given there. Besides, it is supposed that the imaginary parts of the eigenvalues of the operator $\mathcal{K}$ are of the same sign. However, in the case $\mathbf{A}=0$, it is easy to show using the Majorana representation of $\ga$-matrices (see, e.g., \cite{BagGit1}) that if $\e_k$ and $\psi_k$ are the eigenvalue and the eigenvector of \eqref{Dirac_sq3}, then $\e^*_k$ and $\ga^0\psi^*_k$ are also the eigenvalue and the eigenvector of \eqref{Dirac_sq3} (we do not impose the boundary condition \eqref{bound_conds}). In this representation, equation \eqref{Dirac_sq3} with the potential $A_0(x_2)$ becomes
\begin{equation}
    [(\omega-A_0)^2+\Delta-m^2-iE_2]u_k=\e_k u_k,\qquad [(\omega-A_0)^2+\Delta-m^2+iE_2]\ups_k=\e_k \ups_k,\qquad
    \psi_k=\left[
           \begin{array}{c}
             u_k \\
             \ups_k \\
           \end{array}
         \right],
\end{equation}
where $E_2=-\partial_2A_0$. We see that, for the function $A_0(x_2)$ of a general form, the operator \eqref{Dirac_sq3} possesses the complex eigenvalues $\e_k$. Indeed, let $\e_k\in\mathbb{R}$ for some $A_0(x_2)$. Then, perturbing the potential by some $\de A_0(x_2)$, we obtain with the help of the standard formula of perturbation theory that $\de\e_k\in \mathbb{C}$.

Nevertheless, the methods based on the use of the expression \eqref{Trexp2} are nonperturbatively applicable for some classes of the electromagnetic fields. For example, such electromagnetic fields are
\begin{enumerate}
  \item The fields satisfying the condition $F_{\mu\nu}n^\nu=0$, where $n^\mu$ is a constant isotropic vector. The solution of the Dirac equation with these fields is reduced to the solution of the KG type equation with a Hermitian operator (see, for details, \cite{BagGit1});
  \item The fields with $A_0=0$. In this case, the operator \eqref{Dirac_sq3} is self-adjoint, and \eqref{Dirac_sq} is reduced to $\omega^2-\mathcal{K}(\omega)$ when $\e=0$. Therefore, the approaches expounded in Secs. \ref{HTE_Deriv} and \ref{HTE_Dir} are equivalent. The vacuum energy and the $\Omega$-potential written in terms of the spectral zeta-function associated with the squared Dirac equation operator are presented in this case in many papers and books (see, e.g., \cite{KhalilB,Drozd,Dunne2,AndNayTran,KalKaz3} and references therein);
  \item The constant electromagnetic field. In this case, the last term in \eqref{Dirac_sq3} can be factored out from the exponent in \eqref{Trexp2}, and the problem is reduced to the respective problem for the KG equation with a Hermitian operator.
\end{enumerate}

There is also a third approach to the problem based on the Wick rotation prescription (see, e.g., \cite{DesGrigSem,McKeShub,BytsZerb1,BytsZerb2,ByVaZe,KurVass} for details). According to this method, the one-loop correction to the effective action induced by Dirac fermions formally reads as
\begin{equation}\label{Dirac_eucl}
    \ln\det(\hat{P}_E-im)=\ln\det(\hat{P}_E+im)=\frac12\ln\det(\hat{P}^2_E+m^2),
\end{equation}
where $\hat{P}_E=\ga^\mu_EP_\mu$ and
\begin{equation}
    \{\ga^\mu_E,\ga^\nu_E\}=2\de^{\mu\nu},\qquad(\ga^\mu_E)^\dag=\ga^\mu_E.
\end{equation}
The gauge connection $A^E_\mu$ included into $P_\mu$ is real. The chemical potential enters \eqref{Dirac_eucl} as the imaginary additive $i\mu$ to $A^E_0$. The spectrum of $\hat{P}_E$ is real-valued for $\mu=0$ and, in the even-dimensional space, is symmetric with respect to zero. This follows from the fact that there exists the matrix $\ga^5$ in the even-dimensional space that maps the eigenvectors associated with the positive eigenvalues to the eigenvectors associated with the negative eigenvalues. It is assumed in \eqref{Dirac_eucl} that the determinant is regularized in such a way that this symmetry of the spectrum of $\hat{P}_E$ is preserved. The determinant on the right-hand side of \eqref{Dirac_eucl} can be evaluated, for example, with the help of the spectral zeta-function. Then, in the even-dimensional space, the multiplicative anomaly arises (see, for details, \cite{KontsVish,BytsZerb2}) which leads to the additional corrections to the expression \eqref{Dirac_eucl}. Unfortunately, in the case $A_0\neq0$, we do not know a rigorous proof of the equivalence (in the nonperturbative sense) of this approach to the Minkowski spacetime methods presented, for example, in Secs. \ref{HTE_Deriv}, \ref{HTE_Dir} above, or in \cite{FursVass}. For $A_0=0$, it was shown in \cite{BytsZerb2} that these approaches are equivalent for ultrastatic spacetimes. Notice that the Euclidean approach developed in \cite{FursVass} is distinct from that is described here. In \cite{FursVass}, having performed the Wick rotation, the background fields become complex (see Sec. 7.4, \cite{FursVass}), and the corresponding wave operator becomes non self-adjoint even in the massless case.

All the methods mentioned above must be perturbatively equivalent to the approach developed in Sec. \ref{HTE_Dir} even in the case $A_0\neq0$. In other words, the asymptotic expansions of the effective action in the coupling constant or, what is the same in the case at issue, in the inverse powers of a large mass must coincide up to a renormalization ambiguity. This is fulfilled, in particular, if $a_{k/2}$ coincide with the expansion coefficients of the pseudo-trace \eqref{pseudo_tr} associated with the operator \eqref{Dirac_sq3} (see \eqref{Evac2}). In Sec. \ref{Lead_Term_HTE}, we showed that this relation holds for $k=0,2,4$ without the surface terms. However, we have not succeeded in proving this relation for an arbitrary $k$. The perturbative equivalence to the Euclidean approach entails that the coefficients of the large mass expansion in the last line of \eqref{Evac2} multiplied by $T$, where $T$ is the observation period, are equal to the heat kernel expansion coefficients constructed for the operator $\hat{P}^2_E$, where the inverse Wick rotation and renormalization are performed. The direct proof of this statement deserves a separate study. Let us stress that, at first, $\bar{a}_{k/2}$ in \eqref{Evac2} must be expanded in $m^{-1}$.

Notice that the method of deriving the high-temperature expansion that employs the squaring of the Dirac Hamiltonian was used in \cite{Furs1,Furs2} in the case of a stationary gravitational background. The background connection associated with the internal gauge symmetries was not considered in \cite{Furs1,Furs2}. Besides, in \cite{Furs1,Furs2,FursVass}, the expression for the complete high-temperature expansion was not derived. The method we have developed in the previous sections allows us to obtain the complete high-temperature expansion of the contributions of particles and, separately, of antiparticles to the $\Omega$-potential, the nontrivial boundary conditions being also taken into account. This expansion is expressed in terms of the spectral zeta-function of the positive-definite self-adjoint operator of Laplace type. In \cite{Furs1,Furs2,FursVass,Furshep}, the contributions from the boundary conditions and the separate contributions from particles and antiparticles were not considered. In (1.66) of \cite{BytsZerb2}, the high-temperature expansion was obtained in terms of pseudo-differential operators in the case $A_0=0$.

In \cite{ElmSkag95,ElmSkag98}, the approximate expression for the one-loop thermodynamic potential of the Dirac particles was found in a stationary external electromagnetic field. In this expression, the zeroth and the first derivatives of $A_\mu$ are only taken into account. Several leading terms of the high-temperature expansion were also explicitly found in \cite{ElmSkag95,ElmSkag98}. Unfortunately, the rigorous nonperturbative derivation of formula (2.1) in \cite{ElmSkag98} is absent. In \cite{ElmSkag95}, the reference is made to the paper \cite{Schwing.10}, so the approach of \cite{ElmSkag95,ElmSkag98} appears to be equivalent to the method based on the expression \eqref{Trexp2} that we have discussed above. The expression  (4.2) of \cite{ElmSkag98} for the first leading terms of the high-temperature expansion coincides with \eqref{HTE_LME} up to renormalization provided that the doubled symmetric in $A_0$ part of \eqref{HTE_LME} is taken. In \cite{GusShov,Shovk99,Shovkovy} (see also \cite{LoewRoj}), a systematic method of deriving the gradient corrections to the effective action both at zero and finite temperatures was elaborated, the Schwinger's answer for the constant electromagnetic field generalized to a finite temperature being taken as the zeroth order approximation. According to this approach, the vacuum energy contains an imaginary contribution when the electric field is present. It is not clear whether the expression (1) of \cite{Shovk99} is well-defined nonperturbatively or not.

\section{Conclusion}

To sum up, we have obtained an explicit well-defined nonperturbative expression for the complete high-temperature expansion of the one-loop $\Omega$-potential induced by scalars and Dirac fermions with nontrivial boundary conditions. The contributions of particles and antiparticles were treated separately that allows one to find the number of pairs of particles and antiparticles in the system (see, for details, \cite{KhalilB,KalKaz3}). The high-temperature expansion can be used to obtain the one-loop energy of vacuum fluctuations at zero temperature both for bosons and fermions. So we have obtained the vacuum energies for bosons \cite{KalKaz1,KalKaz2,KalKaz3} and fermions. In particular, we have found the explicit expression for the difference between the $C$-symmetric and the non $C$-symmetric vacuum energies (see, for details, \cite{GrMuRaf}) for the Dirac fermions. It turns out that the finite parts of these energies differ only by a surface term, which is absent if the Dirac fields obey the standard asymptotic conditions at spatial infinity, i.e., the condition \eqref{bound_conds} is not imposed. Thus, in this case, we may conclude that both definitions of the vacuum energy give the same result upon renormalization of the divergent part. In a general case, we have shown that this finite surface term should be completely canceled out by the respective finite counterterm. We have also found the leading correction to the high-temperature expansion due to the nontrivial boundary condition \eqref{bound_conds}. Besides, the correction to the coefficient $a_2$ responsible for the conformal anomaly has been derived in the case of the massive charged Dirac fields obeying the MIT bag boundary condition \eqref{bound_conds}. Finally, we have proved that the complete asymptotic in $\be$ high-temperature expansion of the one-loop effective action induced by the Dirac fermions at zero chemical potential is expressed solely in terms of the heat kernel expansion coefficients even in the case when the mass of the field is not large.

The difference between the $C$-symmetric and non $C$-symmetric definitions of the vacuum energies can be found for quarks in a bag. However, at first, one should generalize the above considerations to the case of non-Abelian background gauge fields. Another possibility to examine experimentally these two definitions can be realized in a graphene sheet with boundaries. In this case, the non $C$-symmetric definition is even more preferable from the physical viewpoint as the ``Dirac sea'' (the electrons in the valence band) is really present. The Lorentz symmetry is not a fundamental symmetry in this case either. Therefore, one may impose different normalization conditions than that we have used in Sec. \ref{Lead_Term_HTE}.

Several questions remain unanswered and are left for a future research. First, we did not prove by an explicit calculation that, up to a renormalization ambiguity, the large mass expansion of the vacuum energy \eqref{Evac2} coincides with the standard answer given, for example, by the Euclidean approach. Second, we did not obtain the large mass expansion of the vacuum energy with the surface contributions taken into account. The naive heat kernel expansion is to be resummed in this case. The third problem, which seems to be the most difficult one, is to prove (or disprove) the nonperturbative equivalence of the different approaches to the evaluation of the one-loop effective action at zero and finite temperatures. Of course, the procedure developed admits immediate generalizations to the curved spacetime, the chiral gauge connections, and other space dimensions. We leave these generalizations for a separate study.

\appendix
\section{Some analytic properties of the Mellin transform}\label{Analyt_Prop}

Let us consider the analytic properties of the integrals of the form
\begin{equation}\label{Ila}
    I_\la=\int_0^\La dx x^\la\vf(x),\quad 0<\La<+\infty,
\end{equation}
as functions of the complex variable $\la$. The following theorem holds for such integrals (see, e.g., \cite{GSh,Wong}).

\begin{thm}\label{anal_int_prop}
  Let $\vf(x)$ be absolutely integrable on $(0,\La]$, and for $x\rightarrow+0$ the following asymptotic expansion takes place
  \begin{equation}\label{asympt_exp}
    \vf(x)=\sum_{k=0}^N a_k x^k+O(x^{N+1}).
  \end{equation}
  Then
  \begin{enumerate}
    \item The function $I_\la$ is analytic for $\re\la>-1$ and can be continued analytically to the domain $\re\la>-2-N$, where it possesses the simple poles at the points $\la=-k$, $k=\overline{1,N+1}$, with the residues $a_{k-1}$, respectively;
    \item $I_\la\rightarrow0$ for $|\im\la|\rightarrow\infty$ in the domain $\re\la>-2-N$.
  \end{enumerate}
\end{thm}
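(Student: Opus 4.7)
The plan is to separate $\vf$ into the polynomial prefix supplied by its asymptotic expansion near $0$ and a well-controlled remainder, then analyze each piece. First I would write
\begin{equation*}
I_\la = \sum_{k=0}^N a_k \int_0^\La x^{\la+k}\,dx + \int_0^\La x^\la \psi(x)\,dx, \qquad \psi(x):=\vf(x) - \sum_{k=0}^N a_k x^k.
\end{equation*}
The first sum evaluates explicitly to $\sum_{k=0}^N a_k\La^{\la+k+1}/(\la+k+1)$, which is initially well-defined for $\re\la>-1$ but extends as an elementary meromorphic function to all $\la$ with simple poles at $\la=-k-1$, $k=0,\dots,N$, carrying residues $a_k$. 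Relabelling $m=k+1$ produces precisely the claimed poles at $\la=-m$, $m=\overline{1,N+1}$, with residues $a_{m-1}$.

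Next I would handle the remainder. The hypothesis $\psi(x)=O(x^{N+1})$ as $x\to+0$ yields $|x^\la\psi(x)|\le C x^{\re\la+N+1}$ near zero, while the absolute integrability of $\vf$ on $(0,\La]$ controls the region away from zero. Hence the remainder integral converges absolutely for $\re\la>-N-2$ and, by a standard dominated-convergence argument applied to difference quotients in $\la$, defines a holomorphic function on that half-plane. Combining this with the explicit polynomial part establishes claim (i).

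For claim (ii), the polynomial piece is immediate: each term has modulus $\La^{\re\la+k+1}/|\la+k+1|$, whose denominator grows linearly in $|\im\la|$ while the numerator depends only on $\re\la$. For the remainder I would substitute $x=e^{-t}$ to obtain
\begin{equation*}
\int_0^\La x^\la \psi(x)\,dx = \int_{-\ln\La}^\infty e^{-(\la+1)t}\psi(e^{-t})\,dt = \int_{-\ln\La}^\infty e^{-i(\im\la)t}\bigl[e^{-(\re\la+1)t}\psi(e^{-t})\bigr]\,dt,
\end{equation*}
which is the Fourier transform of an $L^1$ function on the half-line. The Riemann--Lebesgue lemma then supplies the decay as $|\im\la|\to\infty$.

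The main technical point I expect will be executing the Riemann--Lebesgue step cleanly and uniformly in $\re\la$ over compact subintervals of $(-N-2,\infty)$: one must check that the weighted function $e^{-(\re\la+1)t}\psi(e^{-t})$ genuinely lies in $L^1([-\ln\La,\infty))$. Its integrability near $t=+\infty$ (i.e.\ near $x=0$) comes from the bound $\psi(x)=O(x^{N+1})$ combined with $\re\la>-N-2$, while its integrability on the compact interval corresponding to $x\in[x_0,\La]$ follows from the assumed absolute integrability of $\vf$ together with boundedness of the polynomial subtraction. Once this verification is in place, Riemann--Lebesgue delivers claim (ii) without further work.
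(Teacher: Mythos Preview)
Your proof is correct and follows essentially the same route as the paper: subtract the polynomial jet of $\vf$ to isolate the simple poles explicitly, show the remainder integral is holomorphic on $\re\la>-N-2$, and then pass to the logarithmic variable so that the decay as $|\im\la|\to\infty$ follows from Riemann--Lebesgue. The only cosmetic difference is that the paper presents the subtraction as an iterative step (one term at a time) before writing the same final decomposition you start from.
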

\begin{proof}
  The integrals $I_\la$ and $\partial_\la I_\la$ are absolutely convergent for $\re\la>-1$, and, consequently, $I_\la$ is an analytic function of $\la$ in this domain. In this domain, we have
  \begin{equation}
    I_\la=\int_0^\La dx x^\la[\vf(x)-a_0]+a_0\int_0^\La dx x^\la=\int_0^\La dx x^\la[\vf(x)-a_0]+a_0\frac{\La^{\la+1}}{\la+1}.
  \end{equation}
  The right-hand side of this equality is defined for $\re\la>-2$, $\la\neq-1$, and provides an analytic continuation of $I_\la$ to this domain. The function $I_\la$ has a simple pole at the point $\la=-1$ with the residue $a_0$. Continuing this process by subtracting the terms of the asymptotic series \eqref{asympt_exp} from $\vf(x)$, we arrive at the first assertion of the theorem.

  For $\re\la>-2-N$, we have
  \begin{equation}\label{Ila_analyt}
    I_\la=\int_0^\La dx x^\la\Big[\vf(x)-\sum_{k=0}^Na_kx^k\Big]+\sum_{k=0}^Na_k\frac{\La^{\la+k+1}}{\la+k+1}.
  \end{equation}
  Making the substitution $y=\ln x$, we reduce the integral in this expression to the Fourier transform. In the domain of the $\la$ plane under consideration, the integrand is absolutely integrable, and, consequently, according to the Riemann-Lebesgue lemma (see, e.g., \cite{ReedSimon2}), the integral in \eqref{Ila_analyt} tends to zero for $|\im\la|\rightarrow\infty$. The out of the integral terms in \eqref{Ila_analyt} also tend to zero when $|\im\la|\rightarrow\infty$ and $\La>0$.
\end{proof}

In many cases, this theorem allows one to investigate the analytic properties of the function $I_\la$ without an explicit evaluation of the integral \eqref{Ila}. If the upper limit of the integral \eqref{Ila} is $\La=+\infty$, then one can split the integral into two ones and reduce them to \eqref{Ila} by a change of the integration variable, $x\rightarrow x^{-1}$, in the integral with the infinite integration limit. If the asymptotic expansion \eqref{asympt_exp} is performed in the fractional powers of $x$, then one can bring the integral at hand to the form \eqref{Ila} by a suitable change of the integration variable $x$ and redefinition of $\la$. Differentiating $I_\la$ with respect to $\la$, one can generalize the theorem to the case when the asymptotic expansion of $\vf(x)$ contains the terms $x^k\ln^l x$. In this case, $I_\la$ possesses multiple poles in the complex $\la$ plane. Theorem \ref{anal_int_prop} applied to the spectral density of the self-adjoint operator $A>0$ gives the structure of singularities of the spectral zeta-function $\zeta(\nu,A)$ in the complex $\nu$ plane and relates the residues of the spectral zeta-function to the heat kernel expansion coefficients.

\section{Traces}\label{Traces_Sec}

In this appendix, we present the traces appearing in the calculation of the heat kernel expansion coefficients:
\begin{equation}\label{traces}
\begin{gathered}
    \tr S=2m-L_{aa},\qquad \tr S^2=2(m-\frac12 L_{aa})^2,\qquad \tr S^3=2(m-\frac12 L_{aa})^3,\\
    \tr E=4(2A_0^2-m^2),\qquad\tr(\chi E)=0,\qquad \tr(SE)=(2m-L_{aa})(2A_0^2-m^2), \\
    \tr E^2=4m^4+16A_0^4+2F_{ij}F_{ij},\qquad \tr\nabla_k\nabla_kE=8\partial_k\partial_k A_0^2,\qquad \tr\Omega_{ij}\Omega_{ij}=-4[F_{ij}F_{ij}+4(\partial_iA_0)^2+24 A_0^4],\\
    \tr(\chi_{:a}\chi_{:a})=4\partial_a n_i\partial_a n_i=4L_{ab}L_{ab},\qquad\tr S_{:aa}=-L_{aa:bb},\qquad\tr(\chi\chi_{:a}\Omega_{an})=8A_0^2L_{aa},\\
    \tr(\chi_{:a}\chi_{:b}L_{ab})=4L_{ab}L_{ac}L_{bc},\qquad\tr(\chi_{:a}\chi_{:a}S)=(2m-L_{cc})L_{ab}L_{ab},\\
    \tr[(240\bar{\Pi}-120\Pi)E_{;n}]=480\partial_nA_0^2+2880mA_0^2,
\end{gathered}
\end{equation}
where $\partial_a:=e^i_a\partial_i$, $\partial_n=n_i\partial_i$, and the colon denotes the covariant derivative on the boundary $\Ga$. In calculating the coefficient $a_{5/2}$, the following traces appear (see \cite{BGKV}). In $\mathcal{A}^1_5$:
\begin{equation}\label{traces1}
\begin{gathered}
    \tr(\chi E_{;nn})=24m\partial_n A_0^2,\qquad \tr(E_{;n}S)=(4m-2L_{aa})(\partial_n (A_0^2) +2mA_0^2),\qquad \tr(\chi E^2)=0,\\
    \tr(\chi E_{:aa})=16mA_0^2L_{aa},\qquad \tr(SS_{:aa})=0,\qquad \tr(ES^2)=A_0^2(2m-L_{aa})^2.
\end{gathered}
\end{equation}
In $\mathcal{A}^2_5$:
\begin{equation}\label{traces2}
\begin{gathered}
    \tr[(90\bar{\Pi}+450\Pi)L_{aa}E_{;n}]=720L_{aa}(3\partial_n(A_0^2)-4mA_0^2),\qquad\tr(L_{aa}S_{:bb})=0,\qquad\tr(L_{ab}S_{:ab})=0,\\
    \tr(L_{aa}SE)=2A_0^2L_{aa}(2m-L_{bb}),\qquad\tr[(195\bar{\Pi}-105\Pi)L_{aa}^2E]=360A_0^2L_{aa}^2,\\
    \tr[(30\bar{\Pi}+150\Pi)L_{ab}L_{ab}E]=720A_0^2L_{ab}L_{ab}.
\end{gathered}
\end{equation}
In $\mathcal{A}^3_5$:
\begin{equation}\label{traces3}
\begin{gathered}
    \tr(E^2)=16A_0^4,\qquad \tr(\chi E\chi E)=16 A_0^2(A_0^2-2m^2),\qquad\tr(\Omega_{ab}\Omega_{ab})=-96A_0^4,\\
    \tr(\chi\Omega_{ab}\Omega_{ab})=0,\qquad\tr(\chi\Omega_{ab}\chi\Omega_{ab})=-96A_0^4,\qquad\tr(\Omega_{an}\Omega_{an})=-48 A_0^4,\\
    \tr(\chi\Omega_{an}\Omega_{an})=0,\qquad \tr(\chi\Omega_{an}\chi\Omega_{an})=48 A_0^4,\qquad\tr[\Omega_{an}(\chi S_{:a}- S_{:a}\chi)]=4A_0^2(2m-L_{bb})L_{aa},\\
    \tr(\chi\chi_{:a}\Omega_{an}L_{cc})=8A_0^2L_{aa}^2,\qquad\tr(\chi_{:a}\chi_{:b}\Omega_{ab})=8A_0^2(L_{aa}^2-L_{ab}L_{ab}),\qquad\tr(\chi\chi_{:a}\chi_{:b}\Omega_{ab})=0,\\
    \tr(\chi\chi_{:a}\Omega_{an;n})=12\partial_n(A_0^2)L_{aa},\qquad\tr(\chi\chi_{:a}\Omega_{ab:b})=8A_0^2(L_{aa}^2-L_{ab}L_{ab}),\qquad\tr(\chi\chi_{:a}\Omega_{bn}L_{ab})=8A_0^2L_{ab}L_{ab},\\
    \tr(\chi_{:a}E_{:a})=-16mA_0^2L_{aa},\qquad\tr(\chi_{:a}\chi_{:a}E)=8A_0^2L_{ab}L_{ab},\qquad\tr(\chi\chi_{:a}\chi_{:a}E)=0,\\
    \tr(\chi_{:aa}E)=16mA_0^2L_{aa},\qquad \tr(\chi_{:aa}\chi_{:bb})=16A_0^2L_{aa}^2,\qquad\tr(\chi_{:ab}\chi_{:ab})=16A_0^2L_{ab}L_{ab},\\
    \tr(\chi_{:a}\chi_{:a}\chi_{:bb})=0,\qquad \tr(\chi_{:b}\chi_{:aab})=-16A_0^2L_{aa}^2.
\end{gathered}
\end{equation}
In the expressions \eqref{traces1}-\eqref{traces3}, we retain only the terms that give nonzero contributions to the derivative with respect to the constant chemical potential $\mu$ included into $A_0$.

In deriving formulas \eqref{traces}-\eqref{traces3}, we have used that
\begin{equation}
\begin{gathered}
    \tr(\al^i\ga^{i_1}\cdots\ga^{i_n})=0,\qquad [\al_a,\hat{n}]=0,\qquad\partial^{\parallel}_in_i=-L_{aa},\qquad\partial_an_i\partial_bn_i=L_{ac}L_{bc},\\
    \hat{n}_{;n}=0,\qquad e_{a;n}^i=0,
\end{gathered}
\end{equation}
where $\partial^{\parallel}_i:=\partial_i-n_i\partial_n$.

\paragraph{Acknowledgments.}

We appreciate D. Fursaev and D. Vassilevich for the criticism and useful comments. The work is supported by the Ministry of Education and Science of the Russian Federation, project No 3.9594.2017/BCh.

\end{document}